\let\csname proof\endcsname=\relax
\let\csname endproof\endcsname=\relax
\let\csname remark\endcsname=\relax
\let\csname endremark\endcsname=\relax
\let\csname c@remark\endcsname=\relax
\newcommand{\bbR}{\mathbb R}
\newcommand{\CL}{\mathcal L}
\newcommand{\scC}{\mathscr C}
\newcommand{\scF}{\mathscr F}
\newcommand{\scM}{\mathscr M}
\newcommand{\scT}{\mathscr T}
\newcommand{\beq}{\begin{equation}}
\newcommand{\eeq}{\end{equation}}
\newcommand{\beqNo}{\begin{equation*}}
\newcommand{\eeqNo}{\end{equation*}}
\newcommand{\albe}{{\alpha \beta}}
\newcommand{\mnu}{{\mu \nu}}
\newcommand{\form}{\omega}
\newcommand{\orthproj}{{\bm \gamma^\sharp}}
\newcommand{\shl}{M}
\newcommand{\connect}{{\bm \nabla}}
\newcommand{\torsion}{{\bf T}}
\newcommand{\tangm}{\scT_p(\scM)}
\newcommand{\cotangm}{\scT_p^\star(\scM)}
\newcommand{\fframes}{\scF_{\bm u}}
\newcommand{\connectf}{{\bf D}}
\newcommand{\torsionf}{{\bf T}[\connectf]}
\newcommand{\tangf}{\scT_p(\fframes)}
\newcommand{\cotangf}{\scT_p^\star(\fframes)}
\newcommand{\basis}{{\bm e}}
\newcommand{\nbasis}{\bm{\mathrm e}}
\newcommand{\sbasis}[1]{{\basis^\star_{#1}}}
\newcommand{\scobasis}[1]{{\basis_\star^{#1}}}
\newcommand{\kroneck}[2]{\delta^{#1}_{\phantom{#1} #2}}
\newcommand{\proj}[2]{\gamma^{#1}_{\phantom{#1} #2}}
\newcommand{\coef}[3]{\Gamma^{#1}_{\phantom{#1} #2 #3}}
\newcommand{\struct}[3]{C^{#1}_{\phantom{#1} #2 #3}}
\newcommand{\coeff}[3]{\Gamma[D]^{#1}_{\phantom{#1} #2 #3}}
\newcommand{\structf}[3]{C[D]^{#1}_{\phantom{#1} #2 #3}}
\newcommand{\torscoeff}[3]{\mathrm{T}[D]^{#1}_{\phantom{#1} #2 #3}}
\theoremstyle{plain}
	\newtheorem{propositionS}{Proposition}
	\newtheorem{propositionE}{Proposition}
	\newtheorem{corollaryS}{Corollary}
\theoremstyle{remark}
	\newtheorem{remark}{Remark}
	\newtheorem{remarkapp}{Remark}
\definecolor{myblue}{rgb}{0.1,0.1,0.9}
\definecolor{mygreen}{rgb}{0.1,0.55,0.1}
\font\bigastfont=cmr10 scaled \magstep 1
\newcommand{\bdot}{\hbox{\bigastfont .}}
\begin{document}

\title{On the \texorpdfstring{$1+3$}{1+3} Formalism in General Relativity}

\author{Xavier Roy}
\institute{Astrophysics, Cosmology and Gravity Centre, \\
	Department of Mathematics and Applied Mathematics, \\
	Cape Town University, Rondebosch, 7701, South Africa \\
	\email{xavier.roy@uct.ac.za}}

\date{Received: date / Accepted: date}
% The correct dates will be entered by the editor

\maketitle

\begin{abstract}
	We present in this paper the formalism for the splitting of a four-dimensional Lorentzian manifold by a set of time-like integral 
	curves. Introducing the geometrical tensors characterizing the local spatial frames induced by the congruence (namely, the spatial 
	metric tensor, the extrinsic curvature tensor and the Riemann curvature tensor), we derive the Gauss, Codazzi and Ricci equations, 
	along with the evolution equation for the spatial metric. In the present framework, the spatial frames do not form any hypersurfaces 
	as we allow the congruence to exhibit vorticity. The splitting procedure is then applied to the Einstein field equation and it results 
	in an equivalent set of constraint and evolution equations. We discuss the resulting systems and compare them with the ones obtained 
	from the 3+1 formalism, where the manifold is foliated by means of a family of three-dimensional space-like surfaces.
	
	\keywords{general relativity \and differential geometry \and splitting of space--time \and threading of space--time %
		\and Gauss equation \and Codazzi equation \and Ricci equation}
	
	\PACS{02.40.Hw, 04.20.-q, 04.20.Cv, 98.80.Jk}
\end{abstract}

% 02.40.Hw	Classical differential geometry
% 04.20.-q 	Classical general relativity (see also 02.40.-k Geometry, differential geometry, and topology)
% 04.20.Cv 	Fundamental problems and general formalism
% 98.80.Jk 	Mathematical and relativistic aspects of cosmology

%%%%%%%%%%%%%%%%%% SECTION %%%%%%%%%%%%%%%%%%
\section{Introduction}

% -------------------- Subsection --------------------
\subsection{\texorpdfstring{$3+1$ and $1+3$}{3+1 and 1+3} splittings}

The Einstein field equation, 
\beqNo
	{}^{4\!}\bm R - \frac{1}{2} {}^{4\!}R \, \bm g + \Lambda \, \bm g 
		= \frac{8 \pi G}{c^4} \bm T \, , 
\eeqNo
expresses from a four-dimensional point of view the dynamical coupling between the geometry and the content of the Universe. 
Recasting this relation into a set of constraint and evolution equations in a three-dimensional framework allows for a more familiar 
and intuitive examination of the physical system at stake. It can bring for instance better understanding about the kinematical and 
geometrical aspects of particular relativistic solutions for which the field equations can be directly derived. It can also be of use, for 
instance, in the post-Newtonian treatment of weak gravitational fields and in the study of nonlinear structure formation in cosmology. 

This reformulation is realized upon splitting the space--time into space and time, and then upon applying the prescription to Einstein's 
general relativity. The splitting procedure itself can be carried out under two different perspectives: either by using the so-called 3+1 
formalism, or \textit{slicing} of space--time \cite{darmois27,lich39,lich44,lich52,foures48,foures52,foures56,smarr78,alcub08,baumg10}, 
or by using the so-called 1+3 formalism, or \textit{threading} of space--time \cite{zelm56,zelm59,catt58,catt59a,catt59b,catt59c,catt61,%
catt-gasp61,catt-gasp63,ferrarese63,ferrarese65,massa74a,massa74b,massa74c,jantzen91,jantzen92,boersma94a,boersma94b,boersma94c,%
bini98,bini12}.\footnote{%
The reader is referred to the comprehensive reviews \cite{gourg:book} and \cite{jantzen:book} for an introduction (and a historical 
presentation) of the 3+1 and 1+3 approaches, respectively.
}

The first of these techniques is based upon the introduction of a preferred family of three-dimensional space-like surfaces, through the 
level sets of some scalar field. It provides a global space-like association of points. No other relation is assumed, and the identification 
of points located on different slices can be performed arbitrarily.\footnote{%
This approach is also sometimes entitled `ADM formalism' (for Arnowitt, Deser and Misner \cite{adm62}). This denomination should 
be reserved, however, for the Hamiltonian formulation of general relativity only (see \cite{gourg:book} for further comments).
}

The second class of splitting is built upon a set of time-like integral curves, and it affords a global time-like relation between points. 
No other condition is supposed; however, in the case where the congruence exhibits vorticity, the local spatial frames orthogonal to the 
integral curves do not form any family of hypersurfaces. 

% About the parametric manifold picture: 
%
% In the parametric manifold picture, these local spatial frames globally form a family of `parametric submanifolds', thereby providing the 
% possibility of an embedding approach. Each of these submanifolds can be defined intrinsically, without any reference to an ambient 
% manifold, upon introducing the notion of a `parametric structure' (see, e.g., \cite{boersma94b} and \cite{perjes93}). We shall not adopt, 
% however, such a perspective in the present paper and we shall keep instead a four-dimensional point of view.

When both the space-like and time-like conditions hold, the two splittings can be applied on the manifold. In the generic configuration, 
the manifold is covered by a family of space-like hypersurfaces and an independent set of time-like curves. The simplest case is made of 
a vorticity-free time-like congruence, the orthogonal frames of which globally form spatial hypersurfaces.

In the framework of general relativity, an alternative to the above procedures consists in splitting directly the Einstein field equation 
with the aid of the kinematics of the fluid filling the space--time. This prescription disregards the geometry of the spatial frames and 
places emphasis instead on the kinematical quantities of the fluid congruence (namely, on its expansion rate, shear and vorticity) 
\cite{synge37,ehlers61,kundt62,ellis71,ellis73,maccallum73,elst97,ellis98}. The drawback of this approach lies in the impossibility of 
defining geometrically the extrinsic and Riemann curvatures of the spatial frames, and this change of perspective (from the kinematical 
to the geometrical point of view) can be requested as it constitutes an important aspect of Einstein's theory. In the threading picture, 
on the contrary, this equivalence (from the geometrical to the kinematical point of view) is straightforward.

The present article offers a detailed description of the threading of a four-dimensional manifold. We shall provide the Gauss, Codazzi 
and Ricci equations associated with the congruence of curves, along with the evolution equation for the metric of the local spatial frames. 
We shall then apply the splitting to the Einstein field equation. As we shall see, the resulting sets of 1+3 equations closely resemble their 
3+1 counterparts, with additional terms stemming from the non hypersurface-forming character of the spatial frames.

Compared with previous works making use of the threading perspective, we therefore establish a formal correspondence with the 
3+1 approach (i) by supplying the mentioned sets of equations, and (ii) by writing them with respect to an arbitrary four-dimensional 
vector basis and its dual. In preceding analyses, either the former or the latter item was not realized. 

We do not intend, however, to provide an exhaustive investigation of the subject, and we shall restrict ourselves to the derivation of 
the mentioned equations along with the associated material.\footnote{%
Our derivation and most of the notation we use are inspired from Gourgoulhon's book on the slicing formalism \cite{gourg:book}.
}~A unified and thorough description of the 1+3 and 3+1 techniques was supplied by Jantzen and collaborators in 
\cite{jantzen91,jantzen92,jantzen:book}. Notably, the 1+3 decomposition of the four-Riemann tensor (hence the Gauss, Codazzi and 
Ricci equations) can be found in \cite{jantzen:book}, but it is valid in a specific basis only (the one adapted to the congruence of curves). 
As mentioned earlier, we provide in the present paper the general formulation for this decomposition (and hence the general formulation 
for the threaded Einstein equations). In addition to allow for a transparent comparison with the 3+1 approach, such an extension is needed 
in order to work along a specific congruence (that involved in the threading of space--time) while permitting the space-like basis to move 
along any other congruence.\footnote{%
This feature will be used in forthcoming works on tilted cosmologies and cosmological deviation theory.
}\textsuperscript{,}\footnote{%
For completeness, let us mention that the threaded approach to the thermodynamics of a perfect fluid is given in \cite{gourg06}.
}

We proceed as follows. In Section \ref{sec:derivative_op}, we recall the definitions of the covariant derivative and of the Lie derivative. 
This is followed, in Section \ref{sec:fund_forms}, by the introduction of the two fundamental forms of the local spatial frames. Section 
\ref{sec:3connect} is devoted to the presentation of the spatial covariant derivative, while Section \ref{sec:3curv} focuses on the definition 
of the spatial curvature tensors. In Section \ref{sec:proj.curv_tens}, we provide in a first part the different projections of the four-Riemann 
curvature, which result in the Gauss, Codazzi and Ricci equations. The evolution equation for the spatial metric is also given. In a second 
part, we provide the different projections of the four-Ricci tensor. (All these relations are purely geometric and are valid independently 
of the gravitational theory.) We apply the splitting procedure to the Einstein field equation in Section \ref{sec:1+3.efe}. Finally, in Section 
\ref{sec:conclusion}, we summarize and discuss our results.\footnote{%
The main results of the paper are gathered in Propositions \ref{prop:gauss_rel} to \ref{prop:ricci_rel} and Corollaries \ref{cor:gauss_rel} 
to \ref{cor:ricci_rel} for the splitting of the manifold (`S' for splitting), and in Propositions \ref{prop:einstein-gauss} to 
\ref{prop:einstein-ricci} for the 1+3 form of Einstein's equation (`E' for Einstein).
}

This presentation is complemented by two appendices. Appendix \ref{app:kin_approach} provides the reformulation of the 1+3 Einstein 
equations in terms of the kinematical quantities of the fluid, and Appendix \ref{app:bases&coord} is dedicated to the construction of 
bases and coordinates adapted to the congruence.

% -------------------- Subsection --------------------
\subsection{Notation and conventions}

We consider in what follows a smooth four-dimensional manifold $\scM$ endowed with the metric tensor $\bm g$ of signature 
$(-, +, +, +)$, and a set of time-like integral curves $\scC$ in $\scM$ described by the unit tangent vector field $\bm u$. Our analysis 
will be conducted locally at a generic point $p$ of $\scM$.

% _____ Subsubsection _____
\subsubsection{Tangent, cotangent spaces and canonical isomorphism}

We denote by $\tangm$ and $\cotangm$, respectively, the four-dimensional spaces of vectors and 1-forms on $\scM$ at $p$. These 
spaces are mapped onto each other by means of the canonical isomorphism induced by the metric. We denote by a flat the isomorphism 
$\tangm \to \cotangm$ and by a sharp the reverse isomorphism $\cotangm \to \tangm$. Hence 
\begin{itemize}
	\item[$\bullet$] for any vector $\bm v$ in $\tangm$, $\bm v^\flat$ stands for the unique linear form in $\cotangm$ such that 
		\beq \label{eq:def.dual_form}
			\forall \, \bm w \in \tangm \quad\;\;\,
				\langle \bm v^\flat, \bm w \rangle := \bm v \cdot \bm w \, , 
		\eeq
	\item[$\bullet$] for any 1-form $\bm \form$ in $\cotangm$, $\bm \form^\sharp$ stands for the unique vector in $\tangm$ such that 
		\beq \label{eq:def.dual_vect}
			\forall \, \bm v \in \tangm \quad\;\;\,
				\bm \form^\sharp \cdot \bm v := \langle \bm \form, \bm v \rangle \, . 
		\eeq
\end{itemize}
This mapping is extended to multilinear forms as follows. For any tensor field $\bm T$ of type $(0,2)$ and any two vectors $\bm v$ 
and $\bm w$ on $\scM$, we denote 
\begin{itemize}
	\item[$\bullet$] by $\bm T^\sharp$ the tensor field of type $(1,1)$ such that 
		\beq \label{eq:def.dual_tensUpDown}
			\bm T^\sharp (\bm v^\flat, \bm w) := \bm T (\bm v, \bm w) \, , 
		\eeq
	\item[$\bullet$] by $\bm T^{\bdot \sharp}$ the tensor field of type $(1,1)$ such that 
		\beq \label{eq:def.dual_tensDownUp}
			\bm T^{\bdot \sharp} (\bm v, \bm w^\flat) := \bm T (\bm v, \bm w) \, , 
		\eeq
	\item[$\bullet$] by $\bm T^{\sharp \sharp}$ the tensor field of type $(2,0)$ such that 
		\beq \label{eq:def.dual_tensUpUp}
			\bm T^{\sharp \sharp} (\bm v^\flat, \bm w^\flat) := \bm T (\bm v, \bm w) \, . 
		\eeq
\end{itemize}
The mapping of forms of higher types is defined following the same prescription. 

Here and in the sequel, we employ a dot to indicate the scalar product of two vector fields taken with $\bm g$, 
\beqNo
	\forall \, (\bm v, \bm w) \in \tangm \times \tangm \quad\;\;
		\bm v \cdot \bm w := \bm g (\bm v, \bm w) \, ,
\eeqNo
and angle brackets to represent the action of linear forms on vector fields, 
\beqNo
	\forall \, (\bm \form, \bm v) \in \cotangm \times \tangm \quad\;\; 
		\langle \bm \form, \bm v \rangle := \bm \form (\bm v) \, . 
\eeqNo
%

% _____ Subsubsection _____
\subsubsection{Spatial and temporal spaces}

The local spatial frames induced by the congruence of curves (orthogonal to $\bm u$) are collectively referred to as $\fframes$. 
They do not form any family of hypersurfaces in the present framework as we allow the congruence to manifest vorticity. 

% About the parametric manifold picture:
%
% See, however, [parametric manifold picture].

We denote by $\tangf$ the three-dimensional space of \textit{spatial} vectors at $p$, such that 
\beqNo
	\forall \, \bm v \in \tangf \quad\;\;\,
		\bm v \cdot \bm u = 0 \, , 
\eeqNo
and by $\cotangf$ the three-dimensional space of \textit{spatial} 1-forms at $p$, such that 
\beqNo
	\forall \, \bm \form \in \cotangf \quad\;\;\,
		\langle \bm \form, \bm u \rangle = 0 \, . 
\eeqNo
The one-dimensional space of \textit{temporal} vectors is identified by $\mathrm{Vect}_p(\bm u)$, and its elements are such that 
\beqNo
	\forall \, \bm v \in \mathrm{Vect}_p(\bm u) \quad\, 
	\exists \lambda \in \bbR \quad\;
		\bm v = \lambda \bm u \, , 
\eeqNo
and the one-dimensional space of \textit{temporal} 1-forms is identified by $\mathrm{Vect}_p(\bm u^\flat)$, with 
\beqNo
	\forall \, \bm \form \in \mathrm{Vect}_p(\bm u^\flat) \quad\, 
	\exists \lambda \in \bbR \quad\;
		\bm \form = \lambda \bm u^\flat \, . 
\eeqNo
A tensor field on $\scM$ will be called spatial (resp.\ temporal) if it vanishes whenever one of its arguments is temporal (resp.\ spatial).

From the above definitions we write the orthogonal decomposition of the tangent space $\tangm$ as 
\beqNo
	\tangm
		= \tangf \oplus \mathrm{Vect}_p(\bm u) \, , 
\eeqNo
and that of the cotangent space $\cotangm$ according to 
\beqNo
	\cotangm
		= \cotangf \oplus \mathrm{Vect}_p(\bm u^\flat) \, . 
\eeqNo
%

% _____ Subsubsection _____
\subsubsection{Bases and components}

We denote by $\{ \basis_\alpha \}$ an arbitrary basis of $\tangm$ and by $\{ \basis^\alpha \}$ the dual basis in $\cotangm$ satisfying 
by definition\footnote{%
Greek letters are assigned to four-dimensional counters and indices, they run in $\{ 0, 1, 2, 3 \}$. Latin letters refer to space-like counters 
and indices, running in $\{ 1, 2, 3 \}$. The objects $\{ \basis_i \}$ and $\{ \basis^i \}$ are space-like but not necessarily spatial. (For instance, 
for the $\{ \basis_i \}$, we have $\basis_i \cdot \basis_i > 0$ but not necessarily $\basis_i \cdot \bm u = 0$.) We adopt Einstein's summation 
convention over repeated letters.
}
\beqNo
	\langle \basis^\alpha, \basis_\beta \rangle 
		:= \kroneck{\alpha}{\beta} \, . 
\eeqNo
If not explicitly specified, and unless otherwise stated, the components of any tensor are written with respect to these bases. We have for 
a vector $\bm v$ and a 1-form $\bm \form$
\beqNo
	\bm v 
		= \left\langle \basis^\alpha, \bm v \right\rangle \basis_\alpha
		=: v^\alpha \basis_\alpha \, , \qquad 
	\bm \form 
		= \left\langle \bm \form, \basis_\alpha \right\rangle \basis^\alpha
		=: \form_\alpha \basis^\alpha \, , 
\eeqNo
and for a tensor $\bm T$ of type $(k,l)$
\begin{gather*}
	\bm T 
		=: T^{\alpha_1 \ldots \alpha_k}_{\phantom{\alpha_1 \ldots \alpha_k} \beta_1 \ldots \beta_l} 
		\, \basis_{\alpha_1} \otimes \ldots \otimes \basis_{\alpha_k} \otimes 
		\basis^{\beta_1} \otimes \ldots \otimes \basis^{\beta_l} \, , \\ 
	\mathrm{with} \quad 
	T^{\alpha_1 \ldots \alpha_k}_{\phantom{\alpha_1 \ldots \alpha_k} \beta_1 \ldots \beta_l} 
		:= \bm T (\basis^{\alpha_1}, \ldots, \basis^{\alpha_k}, \basis_{\beta_1}, \ldots, \basis_{\beta_l}) \, . 
\end{gather*}
In particular, we write the components of the metric as $g_\albe$ and those of its inverse as $g^\albe$, with 
\beqNo
	g^{\alpha \gamma} g_{\gamma \beta} 
		= \kroneck{\alpha}{\beta} \, .
\eeqNo
The components of the linear form $\bm v^\flat$ associated with $\bm v$ (cf.~Eq.~\eqref{eq:def.dual_form}) and those of the vector 
$\bm \form^\sharp$ associated with $\bm \form$ (cf.~Eq.~\eqref{eq:def.dual_vect}) are expressed with respect to the components of 
$\bm v$ and $\bm \form$ respectively as 
\beqNo
	(\bm v^\flat)_\alpha 
		=: v_\alpha
		= g_{\alpha \gamma} v^\gamma \, ,
	\qquad
	(\bm \form^\sharp)^\alpha 
		=: \form^\alpha
		= g^{\alpha \gamma} \form_\gamma \, ,
\eeqNo
and the components of the tensors $\bm T^\sharp$, $\bm T^{\bdot \sharp}$ and $\bm T^{\sharp \sharp}$ associated with $\bm T$ 
(cf.\ Eqs.~\eqref{eq:def.dual_tensUpDown}, \eqref{eq:def.dual_tensDownUp} and \eqref{eq:def.dual_tensUpUp}) are written in terms 
of those of $\bm T$ according to 
\begin{gather*}
	(\bm T^\sharp)^{\alpha}_{\phantom{\alpha} \beta}
		=: T^{\alpha}_{\phantom{\alpha} \beta} 
		= g^{\alpha \gamma} \, T_{\gamma \beta} \, , \qquad
	(\bm T^{\bdot \sharp})_{\alpha}^{\phantom{\alpha} \beta}
		=: T_{\alpha}^{\phantom{\alpha} \beta} 
		= g^{\beta \gamma} \, T_{\alpha \gamma} \, , \\
	(\bm T^{\sharp \sharp})^\albe
		=: T^\albe
		= g^{\alpha \gamma} g^{\beta \delta} \, T_{\gamma \delta} \, .
\end{gather*}
(We drop the musical symbols in the component notation for the sake of clarity. No confusion should arise.)

%%%%%%%%%%%%%%%%%% SECTION %%%%%%%%%%%%%%%%%%
\section{Derivative operators \label{sec:derivative_op}}

% -------------------- Subsection --------------------
\subsection{Covariant derivative}

We consider in this work the torsionless and metric-compatible connection $\connect$ on $\scM$. 

% _____ Subsubsection _____
\subsubsection{Definition}

The covariant derivative constructs from a tensor field $\bm T$ of type $(k,l)$ a new tensor field $\connect \bm T$ of type $(k,l+1)$, 
the components of which are written 
\beqNo
	(\connect \bm T)^{\alpha_1 \ldots \alpha_k}_{\phantom{\alpha_1 \ldots \alpha_k} \beta_1 \ldots \beta_l \gamma} \, .
\eeqNo
The coefficients of the connection with respect to the bases $\{ \basis_\alpha \}$ and $\{ \basis^\alpha \}$ are defined by 
\beq \label{eq:def.connect_coef}
	\connect \basis_\alpha
		=: \coef{\gamma}{\delta}{\alpha} \, \basis_\gamma \otimes \basis^\delta 
	\quad \Leftrightarrow \quad
	\connect \basis^\alpha
		=: - \coef{\alpha}{\delta}{\gamma} \, \basis^\gamma \otimes \basis^\delta \, .
\eeq
\begin{remark}
	We follow Hawking and Ellis' convention \cite{hawking73} for the order of the covariant indices of the connection coefficients. 
	Note, however, that this choice does not affect the form of our main results.\footnote{%
	See Propositions \ref{prop:gauss_rel} to \ref{prop:ricci_rel}, Corollaries \ref{cor:gauss_rel} to \ref{cor:ricci_rel} and 
	Propositions \ref{prop:einstein-gauss} to \ref{prop:einstein-ricci}.
	}
\end{remark}
%
% Vertical space for a better display.
%
\smallskip

\noindent
The covariant derivative of $\bm T$ along a vector $\bm v$ defines a tensor field of the same type as $\bm T$; it is written 
\beqNo
	\connect_{\bm v} \bm T 
		:= \connect \bm T (\, \underbrace{\ldots}_{k+l} \, , \bm v) \, ,
\eeqNo\\[-4mm]
and in component form 
\beq \label{eq:comp.cd}
	v^\gamma (\connect \bm T)^{\alpha_1 \ldots \alpha_k}_{\phantom{\alpha_1 \ldots \alpha_k} \beta_1 \ldots \beta_l \gamma} \, .
\eeq
%

% _____ Subsubsection _____
\subsubsection{Covariant derivative of a tensor}

The components of the covariant derivative of a tensor $\bm T$ of type $(k,l)$ are given by 
\begin{align} \label{eq:comp.cd_tens}
	T^{\alpha_1 \ldots \alpha_k}_{\phantom{\alpha_1 \ldots \alpha_k} \beta_1 \ldots \beta_l \, ; \, \gamma} = 
	& \;\; 
	\basis_\gamma 
		\left( T^{\alpha_1 \ldots \alpha_k}_{\phantom{\alpha_1 \ldots \alpha_k} \beta_1 \ldots \beta_l} \right) 
	+ \sum_{i = 1}^{k} 
		\, T^{\alpha_1 \ldots \!\! { { {\scriptstyle i \atop \downarrow} \atop \scriptstyle \delta} \atop } \!\! \ldots \alpha_k}_{%
			\phantom{\alpha_1 \ldots \delta \ldots \alpha_k} \beta_1 \ldots \beta_l} 
		\coef{\alpha_i}{\gamma}{\delta} \nonumber \\ 
	& \;\, 
	- \sum_{i = 1}^{l} 
		\, T^{\alpha_1 \ldots \alpha_k}_{\phantom{\alpha_1 \ldots \alpha_k} %
			\beta_1 \ldots \!\! { \atop {\scriptstyle \delta \atop {\uparrow \atop \scriptstyle i} } } \!\! \ldots \beta_l} 
		\coef{\delta}{\gamma}{\beta_i} \, ,
\end{align}
where we use henceforth the short-hand 
\beqNo
	T^{\alpha_1 \ldots \alpha_k}_{\phantom{\alpha_1 \ldots \alpha_k} \beta_1 \ldots \beta_l \, ; \, \gamma} 
		:= (\connect \bm T)^{\alpha_1 \ldots \alpha_k}_{\phantom{\alpha_1 \ldots \alpha_k} \beta_1 \ldots \beta_l \gamma} \, .
\eeqNo
For a function $f$ on $\scM$, they reduce to 
\beq \label{eq:comp.cd_function}
	f_{, \, \alpha} 
		:= (\connect f)_\alpha = \basis_\alpha (f) \, .
\eeq
\begin{remark}
	We use a semicolon for the component form of the covariant derivative of a tensor and a comma when it comes to a function.
\end{remark}
%

% _____ Subsubsection _____
\subsubsection{Torsion tensor}

The torsion of the connection is defined by 
\beq \label{eq:def.torsion.I}
	\torsion (\bm v, \bm w) 
		:= \connect_{\bm v} \bm w - \connect_{\bm w} \bm v - [\bm v, \bm w] \, ,
\eeq
or, alternatively, by 
\beq \label{eq:def.torsion.II}
	\torsion (\bm v, \bm w) (f) 
		:= \connect \connect f (\bm v, \bm w) - \connect \connect f (\bm w, \bm v) \, , 
\eeq
with $f$ being a function and $\bm v$ and $\bm w$ two vectors on $\scM$. It vanishes identically since we consider the connection 
to be Levi-Civita.

% _____ Subsubsection _____
\subsubsection{Curvature tensors}

The Riemann curvature tensor of $\scM$ is given by 
\beq \label{eq:def.riem}
	{}^{4\!}\bm{Riem} ( \bm \form, \bm w, \bm v_1, \bm v_2) 
		:= \langle \bm \form, 
		\connect_{\bm v_1} \connect_{\bm v_2} \bm w 
		- \connect_{\bm v_2} \connect_{\bm v_1} \bm w  
		- \connect_{[\bm v_1, \bm v_2]} \bm w \rangle \, , 
\eeq
with $\bm \form$ in $\cotangm$ and $\bm w$, $\bm v_1$ and $\bm v_2$ in $\tangm$, and the Ricci curvature tensor is defined as 
\beqNo
	{}^{4\!}\bm R(\bm v, \bm w)
		:={}^{4\!}\bm{Riem} (\basis^\alpha, \bm v, \basis_\alpha, \bm w) \, . 
\eeqNo
For convenience we denote the components of ${}^{4\!}\bm{Riem}$ by ${}^{4\!}R^{\alpha}{}_{\! \beta \mnu}$ rather than 
${}^{4\!}\mathit{Riem}^{\alpha} {\!}_{\beta \mnu}$. In the case of a torsion-free connection, we have 
\beq \label{eq:prop.riem}
	v^\mu_{\phantom{\mu} ; \, \beta \, ; \, \alpha} - v^\mu_{\phantom{\mu} ; \, \alpha \, ; \, \beta}
		= {}^{4\!}R^\mu_{\phantom{\mu} \nu \albe} v^\nu \, , 
\eeq
for any vector $\bm v$ on $\scM$.

% -------------------- Subsection --------------------
\subsection{Lie derivative}

% _____ Subsubsection _____
\subsubsection{Definition}

The Lie derivative of a tensor $\bm T$ of type $(k,l)$ along a vector $\bm v$ defines a new tensor $\CL_{\bm v} \bm T$ of type $(k,l)$, 
the components of which are written 
\beqNo
	\CL_{\bm v} T^{\alpha_1 \ldots \alpha_k}_{\phantom{\alpha_1 \ldots \alpha_k} \beta_1 \ldots \beta_l}
		:= \left( \CL_{\bm v} \bm T \right)^{\alpha_1 \ldots \alpha_k}_{\phantom{\alpha_1 \ldots \alpha_k} \beta_1 \ldots \beta_l} \, .
\eeqNo
%

% _____ Subsubsection _____
\subsubsection{Lie derivative of a tensor}

For a torsion-free connection, the components of the Lie derivative of $\bm T$ along a vector $\bm v$ are given by 
\begin{align} \label{eq:comp.lie_tens}
	\CL_{\bm v} T^{\alpha_1 \ldots \alpha_k}_{\phantom{\alpha_1 \ldots \alpha_k} \beta_1 \ldots \beta_l} = 
	&\;\; 
	v^\gamma 
		\, T^{\alpha_1 \ldots \alpha_k}_{\phantom{\alpha_1 \ldots \alpha_k} \beta_1 \ldots \beta_l \, ; \, \gamma}
	- \sum_{i = 1}^{k} 
		\, T^{\alpha_1 \ldots \!\! { { {\scriptstyle i \atop \downarrow} \atop \scriptstyle \gamma} \atop } \!\! \ldots \alpha_k}_{%
			\phantom{\alpha_1 \ldots \gamma \ldots \alpha_k} \beta_1 \ldots \beta_l} 
		\, v^{\alpha_i}_{\phantom{\alpha_i} ; \, \gamma} \nonumber \\ 
	& \; 
	+ \sum_{i = 1}^{l} 
		\, T^{\alpha_1 \ldots \alpha_k}_{\phantom{\alpha_1 \ldots \alpha_k} %
			\beta_1 \ldots \!\! { \atop {\scriptstyle \gamma \atop {\uparrow \atop \scriptstyle i} } } \!\! \ldots \beta_l} 
		\, v^\gamma_{\phantom{\gamma} ; \, \beta_i} \, . 
\end{align}
For a function $f$ on $\scM$, they read 
\beqNo
	\CL_{\bm v} f 
		= v^\alpha f_{, \alpha} \, .
\eeqNo
%

% _____ Subsubsection _____
\subsubsection{Lie bracket and structure coefficients \label{subsubsec:lie_brack}}

The Lie bracket of two vectors $\bm v$ and $\bm w$ is defined by 
\beq \label{eq:def.lie_brack}
	[ \bm v, \bm w ] (f)
		:= \bm v (\bm w (f)) - \bm w (\bm v (f)) \, . 
\eeq
The structure coefficients of the vector basis $\{ \basis_\alpha \}$ are defined from the Lie brackets of its elements: 
\beqNo
	[ \basis_\alpha, \basis_\beta ] (f)
		=: \struct{\gamma}{\alpha}{\beta} \, \basis_\gamma (f) \, ,
\eeqNo
and they are anti-symmetric in their two last indices. For a torsionless connection we have 
\beq \label{eq:rel.coef}
	\struct{\gamma}{\alpha}{\beta} 
		= 2 \, \coef{\gamma}{[\alpha}{\beta]} \, ,
\eeq
where the brackets indicate anti-symmetrization over the indices enclosed.

%%%%%%%%%%%%%%%%%% SECTION %%%%%%%%%%%%%%%%%%
\section{Fundamental forms \label{sec:fund_forms}}

We introduce in this section the two fundamental forms of the local spatial frames of the congruence.

% -------------------- Subsection --------------------
\subsection{First fundamental form}

% _____ Subsubsection _____
\subsubsection{Orthogonal projector}

The orthogonal projector onto the spatial frames of $\scC$ is defined by 
\beq \label{eq:def.orth_proj.vect}
	\begin{array}{cccl}
		\orthproj \colon & \tangm & \to & \tangf \\[2pt] 
		& \bm v & \mapsto & \bm v + \langle \bm u^\flat, \bm v \rangle \, \bm u 
	\end{array}
\eeq
for vectors, and by 
\beq \label{eq:def.orth_proj.form}
	\begin{array}{cccl}
		\orthproj \colon & \cotangm & \to & \cotangf \\[2pt] 
		& \bm \form & \mapsto & \bm \form + \langle \bm \form, \bm u \rangle \, \bm u^\flat 
	\end{array}
\eeq
for 1-forms. Its components are given by 
\beq \label{eq:comp.orth_proj}
	( \orthproj )^\alpha_{\phantom{\alpha} \beta} 
		:= \proj{\alpha}{\beta}
		= \kroneck{\alpha}{\beta} + u^\alpha u_\beta \, . 
\eeq
We construct the spatial projection of tensors of higher types by requiring the fulfillment of the relation 
\beq \label{eq:prop.orth_proj}
	\orthproj (\bm T \otimes \bm S) 
		= (\orthproj \bm T) \otimes (\orthproj \bm S) \, , 
\eeq
for any two tensors $\bm T$ and $\bm S$. The components of the spatial projection $\orthproj \bm T$ of a tensor $\bm T$ of type 
$(k,l)$ are then given by 
\beqNo
	(\orthproj \bm T)^{\alpha_1 \ldots \alpha_k}_{\phantom{\alpha_1 \ldots \alpha_k} \beta_1 \ldots \beta_l} 
		= \proj{\alpha_1}{\mu_1} \ldots \proj{\alpha_k}{\mu_k} \, \proj{\nu_1}{\beta_1} \ldots \proj{\nu_l}{\beta_l}
		\, T^{\mu_1 \ldots \mu_k}_{\phantom{\mu_1 \ldots \mu_k} \nu_1 \ldots \nu_l} \, . 
\eeqNo
\begin{proof}
	We decompose the tensor $\bm T$ into 
	\beqNo
		\bm T 
			= T^{\alpha_1 \ldots \alpha_k}_{\phantom{\alpha_1 \ldots \alpha_k} \beta_1 \ldots \beta_l} 
			\, \basis_{\alpha_1} \otimes \ldots \otimes \basis_{\alpha_k} \otimes 
			\basis^{\beta_1} \otimes \ldots \otimes \basis^{\beta_l} \, .
	\eeqNo
	Applying recursively \eqref{eq:prop.orth_proj} on the projected expression, and noticing with the help of \eqref{eq:def.orth_proj.vect} 
	and \eqref{eq:comp.orth_proj} that 
	\beqNo
		\orthproj (\basis_\alpha) 
			= \proj{\beta}{\alpha} \basis_\beta \, , 
	\eeqNo
	we obtain the sought-after expression.
\end{proof}
%

% _____ Subsubsection _____
\subsubsection{Metric on the spatial frames}

The orthogonal projector $\orthproj$ introduced by Eqs.~\eqref{eq:def.orth_proj.vect}, \eqref{eq:def.orth_proj.form} and 
\eqref{eq:comp.orth_proj} defines the first fundamental form of the spatial frames 
\beqNo
	\bm \gamma
		:= \bm g + \bm u^\flat \otimes \bm u^\flat \, , 
\eeqNo
and in component form 
\beq \label{eq:comp.3metric}
	\gamma_\albe 
		= g_\albe + u_\alpha u_\beta \, . 
\eeq
When its domain of definition is restricted to $\tangf \times \tangf$, the symmetric, non-degenerate, bilinear form $\bm \gamma$ 
plays the role of the (Riemannian) spatial metric on the local spatial frames.

% -------------------- Subsection --------------------
\subsection{Second fundamental form}

% _____ Subsubsection _____
\subsubsection{Weingarten map}

The Weingarten map (or shape operator) of the spatial frames associates to a spatial vector the covariant derivative of the flow vector 
$\bm u$ along that vector: 
\beqNo
	\begin{array}{cccl}
		\bm \chi \colon & \tangf & \to & \tangf \\[2pt] 
		& \bm v & \mapsto & \connect_{\bm v} \bm u \, . 
	\end{array}
\eeqNo
The image of $\tangf$ under $\bm \chi$ is indeed in $\tangf$ as $\bm u$ is unitary.

%
% Vertical space for a better display.
%
\smallskip
\begin{remark}
	From the torsion-free character of the connection together with \eqref{eq:def.torsion.I}, we find 
	\beq \label{eq:prop.weing_map}
		\bm v \cdot \bm \chi (\bm w) 
			= \bm w \cdot \bm \chi (\bm v) + \bm u \cdot [ \bm v, \bm w ] \, , 
	\eeq
	for any spatial vectors $\bm v$ and $\bm w$. Because the collection of spatial frames does not form any hypersurfaces, the Lie 
	bracket $[ \bm v, \bm w ]$ is not spatial (cf.\ Frobenius' theorem in, e.g., \cite{wald84}). The term $\bm u \cdot [\bm v, \bm w]$ 
	therefore does not cancel and the shape operator is not self-adjoint.
\end{remark}
%

% _____ Subsubsection _____
\subsubsection{Extrinsic curvature}

The second fundamental form of the spatial frames is defined for two spatial vectors by 
\beqNo
	\begin{array}{cccl}
		\bm k \colon & \tangf \times \tangf & \to & \bbR \\[2pt] 
		& ( \bm v, \bm w ) & \mapsto & - \bm v \cdot \bm \chi (\bm w) \, . 
	\end{array}
\eeqNo
We extend this definition to arbitrary vectors on $\scM$ upon writing 
\beqNo
	\begin{array}{cccl}
		\bm k \colon & \tangm \times \tangm & \to & \bbR \\[2pt] 
		& ( \bm v, \bm w ) & \mapsto & - \orthproj (\bm v) \cdot \bm \chi (\orthproj (\bm w)) \, . 
	\end{array}
\eeqNo
With the definition of the shape operator we can formulate the extrinsic curvature (or deformation tensor) as 
\beq \label{eq:ext_curv}
	\bm k ( \bm v, \bm w ) 
		= - \orthproj (\bm v) \cdot \connect_{\orthproj (\bm w)} \bm u \, . 
\eeq
\begin{remark}
	Because of the non self-adjointness of the Weingarten map, the extrinsic curvature fails to be symmetric. We have indeed, from 
	Eqs.~\eqref{eq:prop.weing_map} and \eqref{eq:ext_curv}, 
	\beq \label{eq:prop.ext_curv.I}
		\bm k ( \bm v, \bm w ) - \bm k ( \bm w, \bm v ) 
			= - \bm u \cdot [ \bm v, \bm w ] \, ,
	\eeq
	for any spatial vectors $\bm v$ and $\bm w$.
\end{remark}
%

% _____ Subsubsection _____
\subsubsection{Relation between \texorpdfstring{$\bm k$ and $\connect \bm u^\flat$}%
	{the extrinsic curvature and the covariant derivative of the flow 1-form}}

Making use of Eq.~\eqref{eq:ext_curv} along with Eq.~\eqref{eq:def.orth_proj.vect} we deduce, for two arbitrary vectors $\bm v$ and 
$\bm w$, 
\beqNo
	\bm k ( \bm v, \bm w ) 
		= - \connect \bm u^\flat ( \bm v, \bm w ) 
		- \langle \bm a^\flat, \bm v \rangle \langle \bm u^\flat, \bm w \rangle \, , 
\eeqNo
where $\bm a := \connect_{\bm u} \bm u$ stands for the curvature vector of the congruence. (Note that $\bm a$ is spatial.) This expression can be 
equivalently written 
\beq \label{eq:prop.ext_curv.II}
	\connect \bm u^\flat 
		= - \bm k - \bm a^\flat \otimes \bm u^\flat \, , 
\eeq
and in component form 
\beq \label{eq:comp.prop.ext_curv.II}
	u_{\alpha \, ; \, \beta} 
		= - k_\albe - a_\alpha u_\beta \, . 
\eeq
\begin{remark}
	In Appendix \ref{app:kin_approach}, we decompose $\connect \bm u^\flat$ into the kinematical quantities of the fluid filling the 
	space--time, the world lines of which are identified with the integral curves of $\scC$. Expressing the extrinsic curvature in terms 
	of the fluid expansion rate, shear and vorticity, we there provide the kinematical formulation of the 1+3 Einstein equations.
\end{remark}
%

%%%%%%%%%%%%%%%%%% SECTION %%%%%%%%%%%%%%%%%%
\section{Spatial connection \label{sec:3connect}}

This section is devoted to the presentation of the spatial connection associated with the connection $\connect$ on $\scM$.

% -------------------- Subsection --------------------
\subsection{Introduction}

% _____ Subsubsection _____
\subsubsection{Definition}

We define the connection $\connectf$ on the local spatial frames of the congruence by 
\beq \label{eq:def.3connect}
	\connectf \bm T 
		:= \orthproj (\connect \bm T) \, ,
\eeq
for any \textit{spatial} tensor $\bm T$.\footnote{%
We do not extend this definition to non-spatial tensors, as for such objects the operator $\connectf$ would lose its character of derivative.
}~The spatial covariant derivative constructs from a spatial tensor of type $(k,l)$ a new spatial tensor of type $(k,l+1)$, the components of 
which satisfy the relation 
\beq \label{eq:comp.3connect}
	\left( \connectf \bm T \right)^{\alpha_1 \ldots \alpha_k}_{\phantom{\alpha_1 \ldots \alpha_k} \beta_1 \ldots \beta_l \gamma} 
		= \, \proj{\alpha_1}{\mu_1} \ldots \proj{\alpha_k}{\mu_k} \proj{\nu_1}{\beta_1} \ldots \proj{\nu_l}{\beta_l} 
		\proj{\delta}{\gamma} \left( \connect \bm T \right)^{\mu_1 \ldots \mu_k}_{\phantom{\mu_1 \ldots \mu_k} 
		\nu_1 \ldots \nu_l \delta} \, . 
\eeq
\begin{remark}
	Although the spatial connection only applies to spatial tensors, the object $\connectf \bm T$ can accept as arguments vectors and 
	1-forms not necessarily spatial ($\connectf \bm T$ is a tensor defined on $\scM$). Accordingly we can write four-dimensional 
	components for this term.
\end{remark}
%
% Vertical space for a better display.
%
\smallskip

\noindent
The spatial covariant derivative of $\bm T$ along an arbitrary vector $\bm v$ defines a spatial tensor field of the same type as $\bm T$. 
It is written 
\beqNo
	\connectf_{\bm v} \bm T 
		:= \connectf \bm T (\, \underbrace{\ldots}_{k+l} \, , \bm v) \, ,
\eeqNo\\[-4mm]
and in component form 
\beq \label{eq:comp.3dir_der}
	v^\gamma 
		\left( \connectf \bm T \right)^{\alpha_1 \ldots \alpha_k}_{\phantom{\alpha_1 \ldots \alpha_k} %
		\beta_1 \ldots \beta_l \gamma} \, .
\eeq
%

% _____ Subsubsection _____
\subsubsection{Spatial covariant derivative of a tensor}

The components of the spatial covariant derivative of a spatial tensor $\bm T$ of type $(k,l)$ are given by 
\begin{align} \label{eq:comp.scd_tens}
	T^{\alpha_1 \ldots \alpha_k}_{\phantom{\alpha_1 \ldots \alpha_k} \beta_1 \ldots \beta_l \, || \, \gamma} 
		& = \, \proj{\alpha_1}{\mu_1} \ldots \proj{\nu_l}{\beta_l} \proj{\delta}{\gamma} \, \basis_\delta 
			\Big( T^{\mu_1 \ldots \mu_k}_{\phantom{\mu_1 \ldots \mu_k} \nu_1 \ldots \nu_l} \Big) \\ 
		& \!
		+ \sum_{i = 1}^{k} 
			\, T^{\alpha_1 \ldots \!\! { { {\scriptstyle i \atop \downarrow} \atop \scriptstyle \delta} \atop } \!\! \ldots \alpha_k}_{%
				\phantom{\alpha_1 \ldots \delta \ldots \alpha_k} \beta_1 \ldots \beta_l} 
			\, \coeff{\alpha_i}{\gamma}{\delta} 
		- \sum_{i = 1}^{l} 
			\, T^{\alpha_1 \ldots \alpha_k}_{\phantom{\alpha_1 \ldots \alpha_k} %
				\beta_1 \ldots \!\! { \atop {\scriptstyle \delta \atop {\uparrow \atop \scriptstyle i} } } \!\! \ldots \beta_l} 
			\, \coeff{\delta}{\gamma}{\beta_i} \, , \nonumber 
\end{align}
where we have used the short-hand 
\beqNo
	T^{\alpha_1 \ldots \alpha_k}_{\phantom{\alpha_1 \ldots \alpha_k} \beta_1 \ldots \beta_l \, || \, \gamma}
		:= \left( \connectf \bm T \right)^{\alpha_1 \ldots \alpha_k}_{\phantom{\alpha_1 \ldots \alpha_k} 
			\beta_1 \ldots \beta_l \gamma} 
\eeqNo
and defined the coefficients of the spatial connection as 
\beq \label{eq:def.3connect_coef}
	\coeff{\gamma}{\alpha}{\beta} 
		:= \proj{\gamma}{\delta} \proj{\rho}{\alpha} \proj{\sigma}{\beta} \, \coef{\delta}{\rho}{\sigma} \, . 
\eeq
For a function $f$ on $\scM$ they read 
\beq \label{eq:comp.scd_function}
	f_{\, | \, \alpha} 
		= \proj{\beta}{\alpha} \, f_{, \, \beta} 
		= \proj{\beta}{\alpha} \basis_\beta (f) \, . 
\eeq
\begin{proof}
	Equations \eqref{eq:comp.scd_tens} and \eqref{eq:comp.scd_function} are respectively obtained from \eqref{eq:comp.cd_tens} 
	and \eqref{eq:comp.cd_function} and upon using \eqref{eq:comp.3connect}.
\end{proof}
\begin{remark}
	We use two strokes for the component form of the spatial covariant derivative of a tensor and only one when it comes to a function.
\end{remark}
%

%
% Vertical space for a better display.
%
\smallskip
\begin{remark}
	The coefficients of the spatial connection are spatial in the sense that any contraction with $u^\alpha$ or $u_\alpha$ vanishes.
\end{remark}
%

% -------------------- Subsection --------------------
\subsection{Properties}

% _____ Subsubsection _____
\subsubsection{Relations between \texorpdfstring{$\connectf$ and $\connect$}{the spatial and the manifold connections}}

For a spatial tensor $\bm T$ and an arbitrary vector $\bm v$ on $\scM$, we have 
\beq \label{eq:prop.3connect.I}
	\connectf_{\bm v} \bm T 
		= \connectf_{\orthproj (\bm v)} \bm T 
		= \orthproj \big( \hspace{1pt} \connect_{\orthproj (\bm v)} \bm T \hspace{1pt} \big) \, . 
\eeq
\begin{proof}
	Using the spatial character of $\connectf \bm T$, we reformulate the components of $\connectf_{\bm v} \bm T$ given by 
	Eq.~\eqref{eq:comp.3dir_der} into 
	\beqNo
		v^\gamma \proj{\delta}{\gamma} 
			\left( \connectf \bm T \right)^{\alpha_1 \ldots \alpha_k}_{\phantom{\alpha_1 \ldots \alpha_k} 
			\beta_1 \ldots \beta_l \delta} \, . 
	\eeqNo
	Noticing with \eqref{eq:def.orth_proj.vect} and \eqref{eq:comp.orth_proj} that the terms $v^\gamma \gamma^\delta {\!}_\gamma$ 
	denote the components of $\orthproj (\bm v)$, we deduce the first equality. The second equality is obtained by applying 
	\eqref{eq:comp.3connect} and \eqref{eq:comp.cd} to the above expression.
\end{proof}

For two spatial vectors $\bm v$ and $\bm w$, we have the relation 
\beq \label{eq:prop.3connect.II}
	{\bf D}_{\bm v} \bm w
		= \connect_{\bm v} \bm w + \bm k (\bm w, \bm v) \, \bm u \, . 
\eeq
\begin{proof}
	The components of $\connect_{\bm v} \bm w$ are decomposed orthogonally according to 
	\begin{align*}
		v^\beta w^\alpha_{\phantom{\alpha} ; \, \beta}
			& = v^\beta \, \kroneck{\alpha}{\gamma} \kroneck{\delta}{\beta} 
				w^\gamma_{\phantom{\gamma} ; \, \delta} \\ 
			& = v^\beta \, \proj{\alpha}{\gamma} \proj{\delta}{\beta} w^\gamma_{\phantom{\gamma} ; \, \delta}
				- v^\delta u^\alpha u_\gamma w^\gamma_{\phantom{\gamma} ; \, \delta} \\
			& = v^\beta w^\alpha_{\phantom{\alpha} || \, \beta} 
				+ v^\delta u^\alpha w^\gamma u_{\gamma \, ; \, \delta} \, ,
	\end{align*}
	where we have used Eq.~\eqref{eq:comp.orth_proj} and the spatial character of $\bm v$ for the second equality, and 
	Eq.~\eqref{eq:comp.3connect} and the spatial character of $\bm w$ for the third equality. Inserting 
	Eq.~\eqref{eq:comp.prop.ext_curv.II} into the last line we get 
	\beqNo
		v^\beta w^\alpha_{\phantom{\alpha} ; \, \beta} 
			= v^\beta w^\alpha_{\phantom{\alpha} || \, \beta} - u^\alpha w^\gamma v^\delta k_{\gamma \delta} \, , 
	\eeqNo
	which, written in tensor form, concludes the proof.
\end{proof}

\noindent
Similarly, for a spatial vector $\bm v$ and a spatial 1-form $\bm \form$ we can deduce 
\beqNo
	{\bf D}_{\bm v} \bm \form 
		= \connect_{\bm v} \bm \form + \bm k^\sharp (\bm \form, \bm v) \, \bm u^\flat \, . 
\eeqNo
%

% _____ Subsubsection _____
\subsubsection{Compatibility with the spatial metric}

The components of $\connectf \bm \gamma$ are written by means of \eqref{eq:comp.3connect} and \eqref{eq:comp.3metric} as 
\beqNo
	\gamma_{\albe \, || \, \gamma} 
		= \proj{\rho}{\alpha} \proj{\sigma}{\beta} \proj{\delta}{\gamma} \, g_{\rho \sigma \, ; \, \delta} \, . 
\eeqNo
From the compatibility of $\connect$ with respect to $\bm g$ we obtain 
\beqNo
	\connectf \bm \gamma = \bm 0 \, . 
\eeqNo
The spatial connection is therefore compatible with the spatial metric.

% _____ Subsubsection _____
\subsubsection{Torsion tensor}

Following Eq.~\eqref{eq:def.torsion.I}, we define the torsion tensor of the spatial connection as 
\beq \label{eq:def.3torsion.I}
	\torsionf (\bm v, \bm w) 
		:= \connectf_{\bm v} \bm w - \connectf_{\bm w} \bm v - [ \bm v, \bm w ] \, , 
\eeq
for any spatial vectors $\bm v$ and $\bm w$. We extend this definition to arbitrary vectors on $\scM$ upon writing 
\beq \label{eq:def.3torsion.I.ext}
	\torsionf (\bm v, \bm w) 
		:= \connectf_{\orthproj (\bm v)} \orthproj (\bm w) - \connectf_{\orthproj (\bm w)} \orthproj (\bm v)
		- \big[ \orthproj (\bm v), \orthproj (\bm w) \big] \, . 
\eeq
Developing the right-hand side with the help of \eqref{eq:prop.3connect.II}, and using the torsion-free character of the manifold 
connection along with \eqref{eq:def.torsion.I}, we obtain 
\beq \label{eq:exp.3torsion}
	\torsionf (\bm v, \bm w) 
		= \big( \bm k (\bm w, \bm v) - \bm k (\bm v, \bm w) \big) \, \bm u \, ,  
\eeq
for any vectors  $\bm v$ and $\bm w$. The torsion of $\connectf$ is generated by the anti-symmetric part of the extrinsic curvature 
tensor or, equivalently, by the temporal part of the Lie bracket of two spatial vectors (cf.~Eq.~\eqref{eq:prop.ext_curv.I}). It is a tensor 
of type $(1,2)$, with a temporal contravariant part and a spatial covariant part.

%
% Vertical space for a better display.
%
\smallskip
\begin{remark}
	In the 1+3 kinematical formulation, the torsion tensor of the spatial connection is induced by the fluid vorticity (see Appendix 
	\ref{app:kin_approach}).
\end{remark}

\noindent
Considering the analogue of \eqref{eq:def.torsion.II} for the definition of the torsion, 
\beq \label{eq:def.3torsion.II}
	\torsionf (\bm v, \bm w) (f) 
		:= \connectf \connectf f (\bm v, \bm w) -  \connectf  \connectf f (\bm w, \bm v) \, , 
\eeq
with $f$ being a function and $\bm v$ and $\bm w$ two vectors on $\scM$, also yields \eqref{eq:exp.3torsion} as demonstrated in 
the following proof.

\begin{proof}
	Definition \eqref{eq:def.3torsion.II} reads in component form 
	\beq \label{eq:dev.3torsion}
		v^\alpha w^\beta \, \torscoeff{\gamma}{\alpha}{\beta} \, \basis_\gamma (f) 
			= v^\alpha w^\beta \left( f_{\, | \, \alpha \, || \, \beta} - f_{\, | \, \beta \, || \, \alpha} \right) \, . 
	\eeq
	From \eqref{eq:comp.3connect} we have 
	\beqNo
		f_{\, | \, \alpha \, || \, \beta}
			= \proj{\gamma}{\alpha} \proj{\delta}{\beta} \left( \proj{\lambda}{\gamma} \, f_{, \, \lambda} \right)_{; \, \delta} \, . 
	\eeqNo
	Expanding the right-hand side and making use of \eqref{eq:comp.orth_proj}, we write 
	\beqNo
		f_{\, | \, \alpha \, || \, \beta} 
			= \proj{\gamma}{\alpha} \proj{\delta}{\beta} \, f_{, \gamma \, ; \, \delta} 
			+ \proj{\gamma}{\alpha} \proj{\delta}{\beta} \, u^\lambda f_{, \, \lambda} \, u_{\gamma \, ; \, \delta} \, . 
	\eeqNo
	With the help of \eqref{eq:comp.prop.ext_curv.II} we then deduce 
	\beqNo
		f_{\, | \, \alpha \, || \, \beta} - f_{\, | \, \beta \, || \, \alpha} 
			= \proj{\gamma}{\alpha} \proj{\delta}{\beta} 
				\left( f_{, \gamma \, ; \, \delta} - f_{, \, \delta \, ; \, \gamma} \right) 
			- \proj{\gamma}{\alpha} \proj{\delta}{\beta} \, u^\lambda f_{, \, \lambda} 
				\left( k_{\gamma \delta} - k_{\delta \gamma} \right) \, . 
	\eeqNo
	Invoking the torsion-free character of the connection $\connect$ to cancel the first term of the sum and the spatial character 
	of the extrinsic curvature to reformulate the second, we get 
	\beqNo
		f_{\, | \, \alpha \, || \, \beta} - f_{\, | \, \beta \, || \, \alpha} 
			= \left( k_{\beta \alpha} - k_\albe \right) u^\lambda f_{, \, \lambda} \, . 
	\eeqNo
	Inserting this expression back into \eqref{eq:dev.3torsion} and using \eqref{eq:comp.cd_function}, we conclude the proof.
\end{proof}
\begin{remark}
	The torsion tensor is defined in \cite{boersma94a,boersma94b,boersma94c} and \cite{jantzen:book} according to 
	\beq \label{eq:def.3torsion.alt}
		\torsionf (\bm v, \bm w) 
			:= \connectf_{\bm v} \bm w - \connectf_{\bm w} \bm v - \orthproj([ \bm v, \bm w ]) \, . 
	\eeq
	(It vanishes for a Levi-Civita manifold connection.) This expression involves only the spatial part of $[\bm v, \bm w]$ in comparison 
	with \eqref{eq:def.3torsion.I}. Although it does not enter \eqref{eq:def.3torsion.alt}, the temporal part of $[\bm v, \bm w]$ plays 
	the same role in these works as in ours. It only manifests itself in a different way, and more specifically through the so-called deficiency 
	term in \cite{boersma94a,boersma94b,boersma94c} and in an explicit form in \cite{jantzen:book}. As a consequence, choosing one or 
	the other definition for the torsion does not affect the relations deduced from the threading of the manifold.
	
	Note, however, that only our approach is consistent with the alternate definition \eqref{eq:def.3torsion.II}. This implies that by picking 
	definition \eqref{eq:def.3torsion.alt}, the commutation of two spatial covariant derivatives successively applied to a function induces 
	a term that is \textit{not} source of torsion.
\end{remark}
%

%%%%%%%%%%%%%%%%%% SECTION %%%%%%%%%%%%%%%%%%
\section{Spatial curvature tensors \label{sec:3curv}}

We introduce in this section the Riemann tensor, Ricci tensor and Ricci scalar of the spatial frames.

% -------------------- Subsection --------------------
\subsection{Spatial Riemann curvature}

Following the definition of the Riemann curvature tensor of $\scM$ (cf.\ Eq.~\eqref{eq:def.riem}), one may want to define the spatial 
Riemann tensor according to 
\beq \label{eq:def.3riem.torsion-free}
	\bm{Riem} ( \bm \form, \bm w, \bm v_1, \bm v_2) 
		:= \langle \bm \form, 
		\connectf_{\bm v_1} \connectf_{\bm v_2} \bm w 
		- \connectf_{\bm v_2} \connectf_{\bm v_1} \bm w  
		- \connectf_{[\bm v_1, \bm v_2]} \bm w \rangle \, , 
\eeq
with $\bm \form$ in $\cotangf$ and $\bm v_1$, $\bm v_2$ and $\bm w$ in $\tangf$.\footnote{%
Note that we have, from Eq.~\eqref{eq:prop.3connect.I}, 
$\connectf_{[\bm v_1, \bm v_2]} \bm w %
	= \connectf_{\orthproj ([\bm v_1, \bm v_2])} \bm w %
	= \orthproj (\connect_{\orthproj([\bm v_1, \bm v_2])} \bm w)$. 
\label{foo:3riem}
}~However, as it was noticed for instance in \cite{boersma94a}, the object hence defined is not a tensor because of its lack of linearity. 
We have indeed 
\beqNo
	\bm{Riem} ( \bm \form, f \bm w, \bm v_1, \bm v_2) 
		= f \bm{Riem} ( \bm \form, \bm w, \bm v_1, \bm v_2) 
		- \bm u \cdot [\bm v_1, \bm v_2] \; \bm u (f) \, \langle \bm \form, \bm w \rangle \, , 
\eeqNo
for any function $f$ on $\scM$.

\begin{proof}
	From Eq.~\eqref{eq:def.3riem.torsion-free} we have 
	\begin{align} \label{eq:dev.3riem.torsion-free.lin.I}
		\bm{Riem} ( \bm \form, f \bm w, \bm v_1, \bm v_2) = 
			& \; f \bm{Riem} ( \bm \form, \bm w, \bm v_1, \bm v_2) \nonumber \\
			& - \left( \connectf_{\bm v_1} \connectf_{\bm v_2} f 
			- \connectf_{\bm v_2} \connectf_{\bm v_1} f 
			- \connectf_{[\bm v_1, \bm v_2]} f \right) \langle \bm \form, \bm w \rangle \, .
	\end{align}
	Let us start by working on the first two terms between parentheses. For any spatial vector $\bm v$ we write, by means of 
	Eqs.~\eqref{eq:comp.scd_function} and \eqref{eq:comp.cd_function}, 
	\beq \label{eq:dev.3riem.torsion-free.lin.II}
		\connectf_{\bm v} f 
			= v^\alpha \, f_{| \, \alpha} 
			= v^\alpha \, \proj{\beta}{\alpha} \, f_{, \, \beta}
			= v^\alpha \, f_{, \, \alpha}
			= \bm v (f) \, , 
	\eeq
	and thus we infer 
	\beqNo
		\connectf_{\bm v_1} \connectf_{\bm v_2} f - \connectf_{\bm v_2} \connectf_{\bm v_1} f 
			= [\bm v_1, \bm v_2] (f) \, .
	\eeqNo
	The last term between parentheses, on the other hand, can be cast into 
	\begin{align*}
		\connectf_{[\bm v_1, \bm v_2]} f 
			& = [\bm v_1, \bm v_2]^\alpha \, \proj{\beta}{\alpha} \, f_{, \, \beta} \\
			& = [\bm v_1, \bm v_2]^\alpha \, f_{, \, \alpha}
				+ u_\alpha \, [\bm v_1, \bm v_2]^\alpha \, u^\beta \, f_{, \, \beta} \\[2pt]
			& = [\bm v_1, \bm v_2] (f) 
				+ \bm u \cdot [\bm v_1, \bm v_2] \; \bm u (f) \, . 
	\end{align*}
	Subtracting this expression from the one above and plugging the result into \eqref{eq:dev.3riem.torsion-free.lin.I}, we 
	conclude the proof.
\end{proof}

\noindent
The non-linearity of the object \eqref{eq:def.3riem.torsion-free} stems from the fact that the temporal part of the Lie bracket 
$[\bm v_1, \bm v_2]$ does not vanish or, equivalently, from the fact that the extrinsic curvature of the spatial frames is not symmetric 
(cf.~Eq.~\eqref{eq:prop.ext_curv.I}). We present in what follows a definition that circumvents the issue.

%
% Vertical space for a better display.
%
\smallskip
\begin{remark}
	In the 1+3 kinematical formulation, the non-linearity of \eqref{eq:def.3riem.torsion-free} is caused by the vorticity of the fluid.
\end{remark}
%

% _____ Subsubsection _____
\subsubsection{Definition}

The Riemann curvature tensor of the local spatial frames can be defined as 
\begin{align} \label{eq:def.3riem}
	\begin{array}{cl}
		\bm{Riem} \colon & \cotangf \times \tangf \times \tangf \times \tangf \; \to \; \bbR \\[4pt]
		& ( \bm \form, \bm w, \bm v_1, \bm v_2)
		\; \mapsto \; 
		\Big\langle \bm \form, 
			\connectf_{\bm v_1} \connectf_{\bm v_2} \bm w 
			- \connectf_{\bm v_2} \connectf_{\bm v_1} \bm w  
			- \orthproj \left( \connect_{[\bm v_1, \bm v_2]} \bm w \right) \! \Big\rangle \, . 
	\end{array}
\end{align}
\begin{remark}
	This definition was introduced originally by Massa \cite{massa74a,massa74b,massa74c} and studied later by Jantzen and collaborators 
	\cite{jantzen92,jantzen:book} and by Boersma and Dray \cite{boersma94a}. In their works, Jantzen \textit{et al.}\ derive the expressions 
	of the Gauss, Codazzi and Ricci relations in a particular basis only (the one adapted to the congruence). On the other hand, Boersma 
	and Dray derive, in tensorial form, the Gauss relation only, and they provide the component form of \eqref{eq:def.3riem} in a specific 
	basis only. In the present article we provide, as we shall see below, (i) the components of the spatial Riemann tensor in an arbitrary 
	four-dimensional basis and its dual, and (ii) the different projections of the Riemann tensor of $\scM$ onto the congruence and the 
	spatial frames.\footnote{%
	These projections result in the Gauss, Codazzi and Ricci equations (cf.\ respectively Propositions \ref{prop:gauss_rel}, 
	\ref{prop:cod_rel} and \ref{prop:ricci_rel}). Note that we supplement these relations with the equation for the variation of the 
	spatial metric along the integral curves in Proposition \ref{prop:evol.3met}.
	}
\end{remark}
%

%
% Vertical space for a better display.
%
\smallskip
\begin{remark}
	In the terminology of Jantzen \textit{et al.}\ \cite{jantzen92}, Eq.~\eqref{eq:def.3riem} stands for the `Fermi--Walker spatial curvature'. 
	Another definition for the spatial Riemann tensor (historically the first) was given by Zel'manov in \cite{zelm56,zelm59} (see also 
	\cite{ferrarese63,ferrarese65,massa74c,jantzen91}). For an analysis of the latter definition (`Lie spatial curvature') along with the proposal 
	for yet another definition (`co-rotating Fermi--Walker curvature'), we refer the reader to \cite{jantzen91,jantzen:book}.\footnote{%
	In the present work we focus solely on \eqref{eq:def.3riem} as it is the expression that most closely resembles 
	\eqref{eq:def.3riem.torsion-free} (see also footnote \ref{foo:3riem}).
	}
\end{remark}
%

%
% Vertical space for a better display.
%
\smallskip
\noindent
We extend the definition of the spatial Riemann curvature \eqref{eq:def.3riem} to arbitrary tensors on $\scM$ upon writing 
\begin{align} \label{eq:def.3riem.ext}
	\bm{Riem} ( \bm \form, \bm w, \bm v_1, \bm v_2) :=
		& \; \Big\langle \orthproj (\bm \form), \,
			\connectf_{\orthproj (\bm v_1)} \, \connectf_{\orthproj (\bm v_2)} \, \orthproj (\bm w) 
			- \connectf_{\orthproj (\bm v_2)} \, \connectf_{\orthproj (\bm v_1)} \, \orthproj (\bm w) \nonumber \\
		& \;\;\, - \orthproj \Big( \connect_{ \left[ \orthproj (\bm v_1), \orthproj (\bm v_2) \right]} \orthproj (\bm w) \Big) 
		\Big\rangle \, . 
\end{align}
Its components (with respect to an arbitrary vector basis and its dual) are given by\footnote{%
For simplicity, we denote the components of $\bm{Riem}$ by $R^{\alpha} {}_{\beta \mnu}$ rather than 
$\mathit{Riem}^{\alpha} {\!}_{\beta \mnu}$.
}
\begin{align} \label{eq:comp.3riem}
	R^\alpha_{\phantom{\alpha} \beta \mnu} 
		= \proj{\alpha}{\gamma} \proj{\delta}{\beta} \proj{\rho}{\mu} \proj{\sigma}{\nu} 
		\; \bigg\{ 
			& \basis_\rho \left( \coeff{\gamma}{\sigma}{\delta} \right) 
			- \basis_\sigma \left( \coeff{\gamma}{\rho}{\delta} \right) 
			+ \coeff{\lambda}{\sigma}{\delta} \, \coeff{\gamma}{\rho}{\lambda} \nonumber \\[1pt]
			& \hspace{-5mm} 
			- \coeff{\lambda}{\rho}{\delta} \, \coeff{\gamma}{\sigma}{\lambda}
			- \structf{\lambda}{\rho}{\sigma} \coeff{\gamma}{\lambda}{\delta} \nonumber \\
			& \hspace{-5mm} 
			+ \torscoeff{\lambda}{\rho}{\sigma} \, \coef{\gamma}{\lambda}{\delta} 
			+ \basis_\rho (u_\delta) \, \basis_\sigma (u^\gamma)
			- \basis_\rho (u^\gamma) \, \basis_\sigma (u_\delta) 
		\bigg\} \, , 
\end{align}
where we have defined the spatial structure coefficients $\structf{}{}{}$ according to 
\beq \label{eq:def.3struct_coef}
	\structf{\gamma}{\alpha}{\beta} 
		:= \proj{\gamma}{\delta} \proj{\rho}{\alpha} \proj{\sigma}{\beta} \, \struct{\delta}{\rho}{\sigma} 
		=  2 \, \coeff{\gamma}{[ \alpha}{\beta ]} \, . 
\eeq
(The structure coefficients $C^\delta {}_{\rho \sigma}$ were defined in Section \ref{subsubsec:lie_brack}, and we have used Eqs.~\eqref{eq:rel.coef} 
and \eqref{eq:def.3connect_coef} for the second equality.)

\begin{proof}
	Let us first define the set of spatial \textit{starry vectors} $\{ \sbasis{\alpha} \}$ by 
	\beq \label{eq:def.star_vect}
		\sbasis{\alpha} 
			:= \orthproj (\basis_\alpha) 
			= \basis_\alpha + u_\alpha \bm u \, ,
	\eeq
	and the set of spatial \textit{starry 1-forms} $\{ \scobasis{\alpha} \}$ as 
	\beq \label{eq:def.star_form}
		\scobasis{\alpha} 
			:= \orthproj (\basis^\alpha) 
			= \basis^\alpha + u^\alpha \bm u^\flat \, .
	\eeq
	For the purpose of the proof we list some of the properties fulfilled by these objects. We have for the vectors 
	\beq \label{eq:prop.star_vect}
		\sbasis{\alpha} 
			= \proj{\beta}{\alpha} \basis_\beta \, , \qquad 
		\sbasis{\alpha} 
			= \proj{\beta}{\alpha} \sbasis{\beta} \, , \qquad 
		u^\alpha \sbasis{\alpha} 
			= 0 \, ,
	\eeq
	and, similarly, for the 1-forms 
	\beq \label{eq:prop.star_form}
		\scobasis{\alpha} 
			= \proj{\alpha}{\beta} \basis^\beta \, , \qquad 
		\scobasis{\alpha} 
			= \proj{\alpha}{\beta} \scobasis{\beta} \, , \qquad 
		u_\alpha \scobasis{\alpha} = 0 \, . 
	\eeq
	In addition, the starry sets satisfy the relation 
	\beq \label{eq:prop.star_sets}
		\langle \scobasis{\alpha}, \sbasis{\beta} \rangle 
			= \proj{\alpha}{\beta} \, . 
	\eeq
	With the help of \eqref{eq:def.star_vect} and \eqref{eq:def.star_form} we can write the components of the spatial Riemann tensor 
	\eqref{eq:def.3riem.ext} in the form 
	\beq \label{eq:dev.comp.3riem.I}
		R^\alpha_{\phantom{\alpha} \beta \mnu} 
			= \left\langle \scobasis{\alpha}, \,
			\connectf_{\sbasis{\mu}} \connectf_{\sbasis{\nu}} \sbasis{\beta}
			- \connectf_{\sbasis{\nu}} \connectf_{\sbasis{\mu}} \sbasis{\beta} 
			- \orthproj \big( \connect_{[\sbasis{\mu}, \, \sbasis{\nu}]} \, \sbasis{\beta} \big) \right\rangle \, .
	\eeq
	For the sake of clarity, we divide the rest of the proof into the following four parts.
	
	\medskip\smallskip
	\noindent
	\textit{Part I.}
	Let us begin with the decomposition of the term $\connectf_{\sbasis{\beta}} \sbasis{\alpha}$. By means of the property 
	\beqNo
		\connect (f \bm T) 
			= f \connect \bm T + \bm T \otimes \connect f \, , 
	\eeqNo
	with $f$ being a function and $\bm T$ a tensor on $\scM$, together with Eqs.~\eqref{eq:def.3connect} and \eqref{eq:def.star_vect}, 
	we have 
	\beqNo
		\connectf \sbasis{\alpha} 
			= \orthproj \Big( \connect \basis_\alpha 
			+ u_\alpha \connect \bm u 
			+ \bm u \otimes \connect u_\alpha \Big) \, . 
	\eeqNo
	Using Eq.~\eqref{eq:def.connect_coef} for the first term between parentheses, Eq.~\eqref{eq:comp.cd_tens} applied to $\bm u$ 
	for the second term and Eqs.~\eqref{eq:prop.orth_proj}, \eqref{eq:def.star_vect} and \eqref{eq:def.star_form} on the resulting 
	expression, we find 
	\beqNo
		\connectf \sbasis{\alpha}
			 = \left( \coef{\gamma}{\delta}{\alpha} 
			+ u_\alpha \, \basis_\delta (u^\gamma) 
			+ u_\alpha \coef{\gamma}{\delta}{\lambda} u^\lambda \right) 
				\sbasis{\gamma} \otimes \scobasis{\delta} \, . 
	\eeqNo
	From Eqs.~\eqref{eq:comp.orth_proj}, \eqref{eq:prop.star_vect}, \eqref{eq:prop.star_form} and \eqref{eq:def.3connect_coef}, 
	this expression becomes 
	\beqNo
		\connectf \sbasis{\alpha}
			= \Big( \coeff{\gamma}{\delta}{\alpha} + u_\alpha \proj{\gamma}{\lambda} \sbasis{\delta} (u^\lambda) \Big) 
			\, \sbasis{\gamma} \otimes \scobasis{\delta} \, . 
	\eeqNo
	At last, making use of Eq.~\eqref{eq:prop.star_sets}, we obtain 
	\beq \label{eq:dev.comp.3riem.II}
		\connectf_{\sbasis{\beta}} \sbasis{\alpha} 
			= \Big( \coeff{\gamma}{\beta}{\alpha} + u_\alpha \proj{\gamma}{\delta} \sbasis{\beta} (u^\delta) \Big) 
				\, \sbasis{\gamma} \, .
	\eeq
	The components of $\connectf_{\sbasis{\beta}} \sbasis{\alpha}$ with respect to an arbitrary vector basis follow from 
	Eq.~\eqref{eq:def.star_vect}. They are identical to those given in the above decomposition. We shall however consider the form 
	\eqref{eq:dev.comp.3riem.II} in what follows as we will apply another spatial covariant derivative to 
	$\connectf_{\sbasis{\beta}} \sbasis{\alpha}$  (recall that this operator is only defined for spatial tensors).
	
	\medskip\smallskip
	\noindent
	\textit{Part II.}
	We now turn to the decomposition of the first term of the right-hand side of \eqref{eq:dev.comp.3riem.I}. From 
	\eqref{eq:dev.comp.3riem.II} we have 
	\begin{align*}
		\connectf_{\sbasis{\mu}} \connectf_{\sbasis{\nu}} \sbasis{\beta} = 
			& \; \left( \coeff{\gamma}{\nu}{\beta} + u_\beta \proj{\gamma}{\delta} \sbasis{\nu} (u^\delta) \right) 
				\connectf_{\sbasis{\mu}} \sbasis{\gamma} \\
			& \; + \connectf_{\sbasis{\mu}} 
				\!\left( \coeff{\gamma}{\nu}{\beta} + u_\beta \proj{\gamma}{\delta} \sbasis{\nu} (u^\delta) \right) 
				\sbasis{\gamma} \, . 
	\end{align*}
	The first part of the sum is developed by using again Eq.~\eqref{eq:dev.comp.3riem.II} and the second part is expanded with the 
	help of Eqs.~\eqref{eq:dev.3riem.torsion-free.lin.II} and \eqref{eq:comp.orth_proj}. We reformulate the outcome by considering 
	the spatial character of $\coeff{}{}{}$ and the unitarity of $\bm u$, and we obtain 
	\begin{align*}
		\connectf_{\sbasis{\mu}} \connectf_{\sbasis{\nu}} \sbasis{\beta} 
			& = \bigg\{ 
				\sbasis{\mu} \left( \coeff{\gamma}{\nu}{\beta} \right) 
				- u_\beta u^\delta \, \sbasis{\nu} \left( \coeff{\gamma}{\mu}{\delta} \right)
				+ \coeff{\lambda}{\nu}{\beta} \, \coeff{\gamma}{\mu}{\lambda} \nonumber \\[-1pt]
			& + \proj{\gamma}{\delta} \, \sbasis{\mu} ( u_\beta) \, \sbasis{\nu} (u^\delta)
				- u_\beta u^\delta \, \sbasis{\mu} ( u^\gamma) \, \sbasis{\nu} (u_\delta)
				+ u_\beta \proj{\gamma}{\delta} \, \sbasis{\mu} \left( \sbasis{\nu} (u^\delta) \right)
			\bigg\} \; \sbasis{\gamma} \, . 
	\end{align*}
	From this expression and by means of Eqs.~\eqref{eq:def.lie_brack}, \eqref{eq:comp.orth_proj} and \eqref{eq:prop.star_vect}, 
	we write the first two terms of the right-hand side of \eqref{eq:dev.comp.3riem.I} in the form  
	\begin{align} \label{eq:dev.comp.3riem.III}
		\connectf_{\sbasis{\mu}} \connectf_{\sbasis{\nu}} \sbasis{\beta}
			- \connectf_{\sbasis{\nu}} \connectf_{\sbasis{\mu}} \sbasis{\beta} =
		& \; \bigg\{ 
		\proj{\delta}{\beta} \, \sbasis{\mu} \left( \coeff{\gamma}{\nu}{\delta} \right) 
		- \proj{\delta}{\beta} \, \sbasis{\nu} \left( \coeff{\gamma}{\mu}{\delta} \right) \nonumber \\[1pt]
		& \quad 
		+ \coeff{\lambda}{\nu}{\beta} \, \coeff{\gamma}{\mu}{\lambda}
		- \coeff{\lambda}{\mu}{\beta} \, \coeff{\gamma}{\nu}{\lambda} \nonumber \\[4pt]
		& \quad
		+ \proj{\delta}{\beta} \proj{\gamma}{\lambda} \, \sbasis{\mu} ( u_\delta) \, \sbasis{\nu} (u^\lambda)
		- \proj{\delta}{\beta} \proj{\gamma}{\lambda} \, \sbasis{\mu} (u^\lambda) \, \sbasis{\nu} ( u_\delta) \nonumber \\[1pt]
		& \quad 
		+ u_\beta \proj{\gamma}{\lambda} \left[ \sbasis{\mu}, \sbasis{\nu} \right] \! (u^\lambda)  
		\bigg\} \; \sbasis{\gamma} \, .
	\end{align}

	\noindent
	\textit{Interlude.}
	In order to proceed with the third part of the proof, we provide the expression of the Lie bracket of two starry vectors. 
	From Eqs.~\eqref{eq:def.3torsion.I.ext} and \eqref{eq:dev.comp.3riem.II} we have 
	\beq \label{eq:lie.star_vect}
		[ \sbasis{\alpha}, \sbasis{\beta} ] = 
			\, \left( \structf{\gamma}{\alpha}{\beta} 
				+ u_\beta \, \proj{\gamma}{\delta} \, \sbasis{\alpha} (u^\delta) 
				- u_\alpha \, \proj{\gamma}{\delta} \, \sbasis{\beta} (u^\delta) \right) \sbasis{\gamma} 
			- \torscoeff{\gamma}{\alpha}{\beta} \, \basis_\gamma \, , 
	\eeq
	where the coefficients $\structf{}{}{}$ are defined by \eqref{eq:def.3struct_coef}. 
	
	\medskip\smallskip
	\noindent
	\textit{Part III.}
	We now decompose the last term of the right-hand side of \eqref{eq:dev.comp.3riem.I}. Using \eqref{eq:lie.star_vect} and 
	\eqref{eq:def.star_vect} we write 
	\begin{align*}
		\connect_{[ \sbasis{\mu}, \, \sbasis{\nu} ]} \, \sbasis{\beta} = 
			& \; \Big( 
				\structf{\delta}{\mu}{\nu} 
				+ u_\nu \proj{\delta}{\lambda} \sbasis{\mu} (u^\lambda) 
				- u_\mu \proj{\delta}{\lambda} \sbasis{\nu} (u^\lambda) 
			\Big) \, \connect_{\sbasis{\delta}} \sbasis{\beta} \\
			& \, - \torscoeff{\lambda}{\mu}{\nu} \, \Big( 
				\connect_{\basis_\lambda} \basis_\beta 
				+ u_\beta \connect_{\basis_\lambda} \bm u 
				+ \basis_\lambda (u_\beta) \, \bm u 
			\Big) \, . 
	\end{align*}
	The second expression in parentheses is developed with the help of Eq.~\eqref{eq:def.connect_coef}, Eq.~\eqref{eq:comp.cd_tens} 
	written for $\bm u$ and Eq.~\eqref{eq:comp.orth_proj}. We get 
	\begin{align*}
		\connect_{[ \sbasis{\mu}, \, \sbasis{\nu} ]} \, \sbasis{\beta} = 
			& \; \Big( \structf{\delta}{\mu}{\nu} 
				+ u_\nu \proj{\delta}{\lambda} \sbasis{\mu} (u^\lambda) 
				- u_\mu \proj{\delta}{\lambda} \sbasis{\nu} (u^\lambda) \Big) 
				\, \connect_{\sbasis{\delta}} \sbasis{\beta} \\
			& \, - \torscoeff{\lambda}{\mu}{\nu} \left( 
				\proj{\delta}{\beta} \coef{\gamma}{\lambda}{\delta} \, \basis_\gamma 
				+ u_\beta \, \basis_\lambda (u^\gamma) \, \basis_\gamma 
				+ \basis_\lambda (u_\beta) \, \bm u \right) . 
	\end{align*}
	We project this equality onto the spatial frames and we apply Eqs.~\eqref{eq:prop.3connect.I} and \eqref{eq:dev.comp.3riem.II} 
	on the first line and Eq.~\eqref{eq:def.star_vect} on the second. Expanding the resulting expression and making use of the spatial 
	character of $\coeff{}{}{}$, we end up with 
	\begin{align} \label{eq:dev.comp.3riem.IV}
		\orthproj \left( \connect_{[\sbasis{\mu}, \, \sbasis{\nu}]} \, \sbasis{\beta} \right) = 
			& \; \bigg\{ \!
				- \proj{\delta}{\beta} \, u^\sigma u_\nu \, \sbasis{\mu} \left( \coeff{\gamma}{\sigma}{\delta} \right) 
				+ \proj{\delta}{\beta} \, u^\rho u_\mu \, \sbasis{\nu} \left( \coeff{\gamma}{\rho}{\delta} \right) \nonumber \\[1pt]
				& \;\;\; 
				+ \structf{\delta}{\mu}{\nu} \, \coeff{\gamma}{\delta}{\beta} 
				+ u_\beta \proj{\gamma}{\lambda} \structf{\delta}{\mu}{\nu} \, \sbasis{\delta}(u^\lambda) \nonumber \\[5pt]
				& \;\;\;
				+ u_\beta u_\nu \proj{\gamma}{\lambda} \sbasis{\mu} (u^\delta) \, \sbasis{\delta} (u^\lambda)
				- u_\beta u_\mu \proj{\gamma}{\lambda}  \sbasis{\nu} (u^\delta) \, \sbasis{\delta} (u^\lambda) \nonumber \\[1pt]
				& \;\;\; 
				- \torscoeff{\lambda}{\mu}{\nu} \! \left( 
					\proj{\delta}{\beta} \coef{\gamma}{\lambda}{\delta} + u_\beta \basis_\lambda (u^\gamma) \right) 
			\bigg\} \, \sbasis{\gamma} \, . 
	\end{align}

	\noindent
	\textit{Part IV.}
	Subtracting \eqref{eq:dev.comp.3riem.IV} from \eqref{eq:dev.comp.3riem.III} and making use of \eqref{eq:prop.star_vect}, 
	we obtain 
	\begin{align*}
		& \connectf_{\sbasis{\mu}} \connectf_{\sbasis{\nu}} \sbasis{\beta}
			- \connectf_{\sbasis{\nu}} \connectf_{\sbasis{\mu}} \sbasis{\beta} 
			- \orthproj \left( \connect_{[\sbasis{\mu}, \, \sbasis{\nu}]} \, \sbasis{\beta} \right) \\ 
		& %\quad 
		= \proj{\delta}{\beta} \proj{\rho}{\mu} \proj{\sigma}{\nu} \; \bigg\{ 
			\basis_\rho \left( \coeff{\gamma}{\sigma}{\delta} \right) 
			- \basis_\sigma \left( \coeff{\gamma}{\rho}{\delta} \right) 
			+ \coeff{\lambda}{\sigma}{\delta} \, \coeff{\gamma}{\rho}{\lambda} \\[1pt]
			& \hspace{2.3cm} 
			- \coeff{\lambda}{\rho}{\delta} \, \coeff{\gamma}{\sigma}{\lambda}
			- \structf{\lambda}{\rho}{\sigma} \coeff{\gamma}{\lambda}{\delta} 
			+ \torscoeff{\lambda}{\rho}{\sigma} \, \coef{\gamma}{\lambda}{\delta} \\[1pt]
			& \hspace{2.3cm} 
			+ \basis_\rho (u_\delta) \, \basis_\sigma (u^\gamma)
			- \basis_\rho (u^\gamma) \, \basis_\sigma (u_\delta) 
		\bigg\} \; \sbasis{\gamma} \, . 
	\end{align*}
	Finally, inserting this expression into \eqref{eq:dev.comp.3riem.I} and using \eqref{eq:prop.star_sets}, we conclude the proof.
\end{proof}
%

% _____ Subsubsection _____
\subsubsection{Properties}

The spatial Riemann curvature satisfies the relations 
\beq \label{eq:prop.3riem.I}
	R_{\alpha \beta \mnu} 
		= - R_{\alpha \beta \nu \mu} \, , \qquad
	R_{\albe \mnu} 
		= - R_{\beta \alpha \mnu} \, , \qquad
	R_{\alpha [\beta \mnu]}
		= 2 k_{\alpha [ \beta} \, k_{\mnu]} \, . 
\eeq
\begin{proof}
	From the symmetries of the four-Riemann tensor, 
	\beqNo
		{}^{4\!}R_{\alpha \beta \mnu} 
			= - {}^{4\!}R_{\alpha \beta \nu \mu} \, , \qquad
		{}^{4\!}R_{\albe \mnu} 
			= - {}^{4\!}R_{\beta \alpha \mnu} \, , \qquad
		{}^{4\!}R_{\alpha [\beta \mnu]}
			= 0 \, , 
	\eeqNo
	together with Proposition~\ref{prop:gauss_rel} (given hereafter), we obtain the above expressions.
\end{proof}

\noindent
From the properties \eqref{eq:prop.3riem.I}, we infer 
\beq \label{eq:prop.3riem.II}
	R_{\alpha \beta \mu \nu} - R_{\mnu \alpha \beta} 
		= k_{\mu \alpha} k_{\nu \beta} - k_{\mu \beta} k_{\nu \alpha}
		- k_{\alpha \mu} k_{\beta \nu} + k_{\beta \mu} k_{\alpha \nu} \, . 
\eeq
\begin{remark}
	When the extrinsic curvature is symmetric, we recover the usual first Bianchi identity for the spatial Riemann tensor (cf.\ 
	Eq.~\eqref{eq:prop.3riem.I}) and relation \eqref{eq:prop.3riem.II} vanishes.
\end{remark}
%

% -------------------- Subsection --------------------
\subsection{Spatial Ricci tensor}

% _____ Subsubsection _____
\subsubsection{Definition}

We define the Ricci curvature tensor of the spatial frames from the trace of the spatial Riemann tensor \eqref{eq:def.3riem.ext} taken 
on the first and third arguments: 
\beq \label{eq:def.3ricci}
	\begin{array}{cccl}
		\bm R \colon & \tangm \times \tangm & \to & \bbR \\[3pt]
		& (\bm v, \bm w)
		& \mapsto 
		& \bm{Riem} (\basis^\alpha, \bm v, \basis_\alpha, \bm w) \, . 
	\end{array}
\eeq
\begin{remark}
	The other traces of the Riemann curvature of $\scC$ either vanish or are equal (possibly up to a sign) to the one here defined. 
	(This can be shown from \eqref{eq:prop.3riem.I}.)
\end{remark}
%
% Vertical space for a better display.
%
\smallskip

\noindent
The spatial Ricci scalar is defined from the trace of the spatial Ricci tensor. It is written 
\beqNo
	R := \bm R^\sharp (\basis^\alpha,\basis_\alpha) \, . 
\eeqNo
%

% _____ Subsubsection _____
\subsubsection{Property}

The spatial Ricci tensor satisfies the relation 
\beq \label{eq:prop.3ricci}
	R_{\alpha \beta} - R_{\beta \alpha} 
		= - 2 k k_{[ \alpha \beta ]} - 2 k^\gamma {}_{[ \alpha} k_{\beta ] \gamma} \, , 
\eeq
where $k$ is the trace of the extrinsic curvature tensor.

\begin{proof}
	This expression is obtained by taking the trace of Eq.~\eqref{eq:prop.3riem.II} on the first and third indices and by using 
	definition \eqref{eq:def.3ricci}.
\end{proof}
\begin{remark}
	Because the extrinsic curvature is not symmetric, the spatial Ricci tensor is not symmetric either.
\end{remark}
%

%
% Vertical space for a better display.
%
\vspace{1pt}
\begin{remark}
	In the 1+3 kinematical formulation, the spatial Ricci tensor fails to be symmetric on account of the fluid vorticity.
\end{remark}
%

%%%%%%%%%%%%%%%%%% SECTION %%%%%%%%%%%%%%%%%%
\section{\texorpdfstring{$1+3$}{1+3} form of the curvature tensors \label{sec:proj.curv_tens}}

In this section, we provide the 1+3 formulation of the curvature tensors of the manifold in terms of the spatial curvature tensors 
introduced above.

The projections of the Riemann tensor of $\scM$, in terms of the Riemann tensor associated with $\scC$, are given in the first subsection. 
They yield the 1+3 version of the Gauss, Codazzi and Ricci equations. From these expressions, we supply in the ensuing part the different 
projections of the four-Ricci tensor in terms of the spatial Ricci tensor.

Our results are written in the form of propositions and corollaries. We here only add a few remarks; a thorough discussion follow in 
Section \ref{sec:conclusion}.

% -------------------- Subsection --------------------
\subsection{\texorpdfstring{$1+3$}{1+3} form of the Riemann tensor} \label{subsec:proj.riem}

% _____ Subsubsection _____
\subsubsection{Gauss equation}

\begin{propositionS} \label{prop:gauss_rel}
	In the 1+3 formalism the Gauss relation is given by 
	\beq \label{eq:gauss_rel.I}
		\proj{\mu}{\rho} \proj{\sigma}{\nu} \proj{\gamma}{\alpha} \proj{\delta}{\beta} 
			\, {}^{4\!}R^\rho_{\phantom{\rho} \sigma \gamma \delta} 
		=  R^\mu_{\phantom{\mu} \nu \albe} 
			+ k^\mu_{\phantom{\mu} \alpha} k_{\nu \beta} - k^\mu_{\phantom{\mu} \beta} k_{\nu \alpha} \, .
	\eeq
\end{propositionS}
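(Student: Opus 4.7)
The plan is to evaluate the spatial Riemann tensor \eqref{eq:def.3riem.ext} on purely spatial inputs, express the double spatial covariant derivative in terms of the manifold connection via \eqref{eq:prop.3connect.II}, and then recognize the resulting commutator of $\connect$'s as ${}^{4\!}\bm{Riem}$ through its definition \eqref{eq:def.riem}. First, I would take spatial vectors $\bm w$, $\bm v_1$, $\bm v_2$, for which every projector in \eqref{eq:def.3riem.ext} acts as the identity. Using \eqref{eq:prop.3connect.II}, the inner derivative reads
\[
	\connectf_{\bm v_2}\bm w = \connect_{\bm v_2}\bm w + \bm k(\bm w, \bm v_2)\,\bm u,
\]
which is itself spatial, so \eqref{eq:def.3connect} applies a second time and yields
\[
	\connectf_{\bm v_1}\connectf_{\bm v_2}\bm w = \orthproj\bigl(\connect_{\bm v_1}\connect_{\bm v_2}\bm w\bigr) + \bm k(\bm w, \bm v_2)\,\bm \chi(\bm v_1),
\]
after a Leibniz expansion in which the derivative of the scalar $\bm k(\bm w,\bm v_2)$ is annihilated by $\orthproj(\bm u) = \bm 0$ and in which $\connect_{\bm v_1}\bm u = \bm \chi(\bm v_1)$ is already spatial for spatial $\bm v_1$.

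Antisymmetrizing in $\bm v_1 \leftrightarrow \bm v_2$ and subtracting the bracket term $\orthproj(\connect_{[\bm v_1, \bm v_2]}\bm w)$ of \eqref{eq:def.3riem.ext}, the manifold-connection part reassembles into ${}^{4\!}\bm{Riem}(\,\cdot\,, \bm w, \bm v_1, \bm v_2)$ by \eqref{eq:def.riem}, while the extrinsic-curvature part collapses to $\bm k(\bm w, \bm v_2)\bm \chi(\bm v_1) - \bm k(\bm w, \bm v_1)\bm \chi(\bm v_2)$. Pairing with a spatial 1-form $\bm\omega$ and moving to components, I would use \eqref{eq:comp.prop.ext_curv.II} to write $v^\alpha \chi^\mu_{\phantom{\mu}\alpha} = -v^\alpha k^\mu_{\phantom{\mu}\alpha}$ on spatial $\bm v$ (the acceleration piece $-a^\mu u_\alpha v^\alpha$ vanishes), so that the correction evaluates to $-k^\mu_{\phantom{\mu}\alpha}k_{\nu\beta} + k^\mu_{\phantom{\mu}\beta}k_{\nu\alpha}$ on spatial test inputs. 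Since $R^\mu_{\phantom{\mu} \nu \albe}$, the fully projected four-Riemann, and the extrinsic curvature are spatial in every index, the identity promotes unchanged from spatial test inputs to arbitrary ones by insertion of projectors, giving \eqref{eq:gauss_rel.I} after rearrangement.

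The main obstacle is the careful bookkeeping of which terms die under $\orthproj$: the two essential cancellations are the Leibniz contribution $\bm v_1[\bm k(\bm w, \bm v_2)]\,\orthproj(\bm u) = \bm 0$ and the disappearance of the acceleration piece of $\connect_{\bm v_1}\bm u$ when tested against spatial $\bm v_1$. It is also worth stressing that the deliberate choice in \eqref{eq:def.3riem.ext} to subtract $\orthproj(\connect_{[\bm v_1, \bm v_2]}\bm w)$ (rather than $\connectf_{[\bm v_1, \bm v_2]}\bm w$, which would be ill-defined because $[\bm v_1, \bm v_2]$ is generically not spatial in the threading picture) is precisely what preserves the clean match with the four-dimensional commutator and allows the derivation to run without any extra torsion-like artifact from the non-hypersurface-forming character of the spatial frames.
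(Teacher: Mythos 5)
Your proof is correct, but it takes a genuinely different route from the paper's. The paper works entirely in components: it first establishes a spatial Ricci identity with torsion, $v^\mu_{\phantom{\mu}||\,\beta\,||\,\alpha}-v^\mu_{\phantom{\mu}||\,\alpha\,||\,\beta}=R^\mu_{\phantom{\mu}\nu\albe}v^\nu-\proj{\mu}{\rho}\,\torscoeff{\lambda}{\alpha}{\beta}\,v^\rho_{\phantom{\rho};\,\lambda}$ (its Eq.~\eqref{eq:dev.gauss_rel}), by matching the commutator of spatial covariant derivatives against the component formula \eqref{eq:comp.3riem}; it then re-expresses the same commutator through the four-dimensional Ricci identity \eqref{eq:prop.riem} and \eqref{eq:comp.prop.ext_curv.II}, and the torsion terms cancel via \eqref{eq:exp.3torsion}. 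You instead expand the definition \eqref{eq:def.3riem.ext} directly and coordinate-free, applying \eqref{eq:prop.3connect.II} twice so that the manifold commutator reassembles into ${}^{4\!}\bm{Riem}$ by \eqref{eq:def.riem} and the Weingarten terms produce the $k\cdot k$ correction; the torsion of $\connectf$ never appears. Your version is shorter, closer to the textbook $3+1$ derivation, and makes transparent why the antisymmetry of $\bm k$ causes no obstruction; the paper's version buys the intermediate identity \eqref{eq:dev.gauss_rel}, which exhibits explicitly how the torsion of the spatial connection enters and drops out. Your key cancellations check out: $\orthproj(\bm u)=\bm 0$ kills the Leibniz scalar term, $u_\alpha v_1^\alpha=0$ kills the acceleration piece of $\bm\chi(\bm v_1)$, and $\langle\scobasis{\mu},\bm\chi(\sbasis{\alpha})\rangle=-k^\mu_{\phantom{\mu}\alpha}$ gives exactly the signs of \eqref{eq:gauss_rel.I}. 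One small inaccuracy in a side remark: $\connectf_{[\bm v_1,\bm v_2]}\bm w$ is not ill-defined --- by \eqref{eq:prop.3connect.I} it equals $\connectf_{\orthproj([\bm v_1,\bm v_2])}\bm w$; the actual defect of using it in place of $\orthproj\big(\connect_{[\bm v_1,\bm v_2]}\bm w\big)$ is that the resulting object fails to be linear over functions (cf.\ the discussion around \eqref{eq:def.3riem.torsion-free}). This does not affect your argument.
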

\begin{proof}
	From Eq.~\eqref{eq:comp.scd_tens} we write the components of $\connectf \connectf \bm v$, with $\bm v$ being a spatial 
	vector, as 
	\begin{align*}
		v^\mu_{\phantom{\mu} || \, \beta \, || \, \alpha} 
			= \, \proj{\mu}{\gamma} \proj{\rho}{\alpha} \proj{\sigma}{\beta} \, \bigg\{ 
			& \sbasis{\rho} \left( \coeff{\gamma}{\sigma}{\delta} \right) v^\delta 
				+ \coeff{\lambda}{\sigma}{\delta} \, \coeff{\gamma}{\rho}{\lambda} \, v^\delta \\
			& \hspace{-8pt} - \coeff{\lambda}{\rho}{\sigma} \, \coeff{\gamma}{\lambda}{\delta} \, v^\delta
				- \coeff{\lambda}{\rho}{\sigma} \, \sbasis{\lambda} (v^\gamma) 
				+ \coeff{\gamma}{\sigma}{\lambda} \sbasis{\rho} (v^\lambda) \\[1pt]
			& \hspace{-8pt} + \coeff{\gamma}{\rho}{\lambda} \sbasis{\sigma} (v^\lambda) 
				+ \sbasis{\rho} \left( \sbasis{\sigma} (v^\gamma) \right) 
				+ u_\delta \, \sbasis{\rho} (u^\gamma) \, \sbasis{\sigma} (v^\delta) 
			\bigg\} \, . 
	\end{align*}
	From Eqs.~\eqref{eq:def.3struct_coef} and \eqref{eq:lie.star_vect} and by means of the spatial character of $\bm v$, we then obtain 
	\begin{align*}
		v^\mu_{\phantom{\mu} || \, \beta \, || \, \alpha} - v^\mu_{\phantom{\mu} || \, \alpha \, || \, \beta} 
			= \, \proj{\mu}{\gamma} \proj{\rho}{\alpha} \proj{\sigma}{\beta} \, \bigg\{ 
			& \proj{\delta}{\nu} \, \bigg(
				\sbasis{\rho} \left( \coeff{\gamma}{\sigma}{\delta} \right) 
				- \sbasis{\sigma} \left( \coeff{\gamma}{\rho}{\delta} \right) \\
			& + \coeff{\lambda}{\sigma}{\delta} \, \coeff{\gamma}{\rho}{\lambda} 
				- \coeff{\lambda}{\rho}{\delta} \, \coeff{\gamma}{\sigma}{\lambda} \\[4pt]
			& - \structf{\lambda}{\rho}{\sigma} \, \coeff{\gamma}{\lambda}{\delta} 
				+ \sbasis{\rho} (u_\delta) \, \sbasis{\sigma} (u^\gamma) \\
			&	- \sbasis{\rho} (u^\gamma) \, \sbasis{\sigma} (u_\delta) 
				\bigg) \, v^\nu
			- \torscoeff{\lambda}{\rho}{\sigma} \basis_\lambda (v^\gamma)
				\bigg\} \, . 
	\end{align*}
	Using Eq.~\eqref{eq:prop.star_vect} to remove the stars, Eq.~\eqref{eq:comp.3riem} to introduce the spatial Riemann curvature 
	and Eq.~\eqref{eq:comp.cd_tens} to insert the covariant derivative of $\bm v$, we find 
	\begin{align} \label{eq:dev.gauss_rel}
		v^\mu_{\phantom{\mu} || \, \beta \, || \, \alpha} - v^\mu_{\phantom{\mu} || \, \alpha \, || \, \beta} 
			= R^\mu_{\phantom{\mu} \nu \albe} v^\nu 
			- \proj{\mu}{\rho} \, \torscoeff{\lambda}{\alpha}{\beta} \, v^\rho_{\phantom{\rho} ; \, \lambda} \, . 
	\end{align}
	This relation concludes the first part of the proof.
	
	Equation \eqref{eq:dev.gauss_rel} strongly resembles the usual expression relating a spatial connection with torsion to its curvature 
	tensor. The fact that the torsion of the spatial connection has a non-spatial contravariant part prevents the use of the spatial connection 
	on the right-hand side of the expression.
	
	We now express the left-hand side of \eqref{eq:dev.gauss_rel} in terms of the covariant derivative of the manifold. From 
	Eqs.~\eqref{eq:comp.3connect} and \eqref{eq:comp.prop.ext_curv.II} we have 
	\begin{align*}
		v^\mu_{\phantom{\mu} || \, \beta \, || \, \alpha} 
			= \proj{\mu}{\rho} \proj{\gamma}{\alpha} \proj{\delta}{\beta} \, v^\rho_{\phantom{\rho} ; \, \delta \, ; \, \gamma} 
			- \proj{\mu}{\rho} \, k_{\beta \alpha} u^\lambda v^\rho_{\phantom{\rho} ; \, \lambda}
			- k^\mu_{\phantom{\mu} \alpha} k_{\lambda \beta} v^\lambda \, . 
	\end{align*}
	With Eq.~\eqref{eq:prop.riem} and the spatial character of $\bm v$ we write 
	\begin{align*}
		v^\mu_{\phantom{\mu} || \, \beta \, || \, \alpha} - v^\mu_{\phantom{\mu} || \, \alpha \, || \, \beta} = 
			& \; \proj{\mu}{\rho} \proj{\sigma}{\nu} \proj{\gamma}{\alpha} \proj{\delta}{\beta} 
				\, {}^{4\!}R^\rho_{\phantom{\rho} \sigma \gamma \delta} \, v^\nu 
				+ \proj{\mu}{\rho} \, \big( k_\albe - k_{\beta \alpha} \big) \, u^\lambda \, v^\rho_{\phantom{\rho} ; \lambda} \\[1.5pt]
			& - \big( k^\mu_{\phantom{\mu} \alpha} k_{\lambda \beta} 
				- k^\mu_{\phantom{\mu} \beta} k_{\lambda \alpha} \big) \, v^\lambda \, .
	\end{align*}
	Inserting this expression into \eqref{eq:dev.gauss_rel} and making use of \eqref{eq:exp.3torsion} in component form we deduce 
	\beqNo
		\proj{\mu}{\rho} \proj{\sigma}{\nu} \proj{\gamma}{\alpha} \proj{\delta}{\beta} 
			{}^{4\!}R^\rho_{\phantom{\rho} \sigma \gamma \delta} v^\nu
			= \big( R^\mu_{\phantom{\gamma} \nu \albe} 
			+ k^\mu_{\phantom{\mu} \alpha} k_{\nu \beta} - k^\mu_{\phantom{\mu} \beta} k_{\nu \alpha} \big) \, v^\nu \, .
	\eeqNo
	Noticing at last that this relation can be written not only for spatial vectors but also for any vectors on $\scM$ (thanks to the 
	presence of the orthogonal projector and the fact that both $\bm{Riem}$ and $\bm k$ are spatial), we conclude the proof.
\end{proof}
%

% _____ Subsubsection _____
\subsubsection{Codazzi equation}

\begin{propositionS} \label{prop:cod_rel}
	In the 1+3 formalism the Codazzi relation is given by 
	\beq \label{eq:cod_rel.I}
		\proj{\mu}{\rho} u^\sigma \proj{\gamma}{\alpha} \proj{\delta}{\beta} 
			\, {}^{4\!}R^\rho_{\phantom{\rho} \sigma \gamma \delta} 
		= k^\mu_{\phantom{\mu} \alpha \, || \, \beta} - k^\mu_{\phantom{\mu} \beta \, || \, \alpha}
			- a^\mu \left( k_\albe - k_{\beta \alpha} \right) \, . 
	\eeq
\end{propositionS}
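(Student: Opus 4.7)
The plan is to mirror the strategy used for the Gauss relation, but applied to the flow vector $\bm u$ rather than to a generic spatial vector. The natural starting point is the Ricci identity \eqref{eq:prop.riem} applied to $\bm u$, namely
\[
u^\rho{}_{;\,\delta\,;\,\gamma} - u^\rho{}_{;\,\gamma\,;\,\delta} \;=\; {}^{4\!}R^\rho{}_{\sigma\gamma\delta}\, u^\sigma,
\]
since contracting this with $\proj{\mu}{\rho}\,\proj{\gamma}{\alpha}\,\proj{\delta}{\beta}$ already produces the desired left-hand side of \eqref{eq:cod_rel.I} (after relabelling).

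Next I would rewrite each term of the commutator on the left by invoking the fundamental identity \eqref{eq:comp.prop.ext_curv.II}, $u^\rho{}_{;\,\delta} = -k^\rho{}_{\,\delta} - a^\rho u_\delta$. Differentiating once more and substituting the same identity for the $u_{\delta;\,\gamma}$ factor that appears, one obtains
\[
u^\rho{}_{;\,\delta\,;\,\gamma} \;=\; -k^\rho{}_{\,\delta\,;\,\gamma} - a^\rho{}_{;\,\gamma}\,u_\delta + a^\rho k_{\delta\gamma} + a^\rho a_\delta u_\gamma,
\]
together with the analogous expression with $\gamma$ and $\delta$ interchanged. Antisymmetrising in $(\gamma,\delta)$ then produces four groups of terms: an antisymmetrised derivative of $\bm k$, a piece linear in $\connect \bm a$ carrying a bare $u_\gamma$ or $u_\delta$, the contribution $a^\rho(k_{\delta\gamma}-k_{\gamma\delta})$, and a term $a^\rho(a_\delta u_\gamma - a_\gamma u_\delta)$.

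Contracting with the projectors $\proj{\mu}{\rho}\,\proj{\gamma}{\alpha}\,\proj{\delta}{\beta}$ then does most of the remaining work. Every term carrying a free $u_\gamma$ or $u_\delta$ is annihilated by $\proj{\gamma}{\alpha}$ or $\proj{\delta}{\beta}$, so both the $\connect \bm a$ contribution and the $a^\rho a_\cdot u_\cdot$ contribution disappear outright. The remaining antisymmetrised projection of $k^\rho{}_{\,\delta\,;\,\gamma}$ collapses to a combination of spatial covariant derivatives via \eqref{eq:comp.3connect}, because $\bm k$ is itself spatial, producing $k^\mu{}_{\alpha\,||\,\beta} - k^\mu{}_{\beta\,||\,\alpha}$. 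Finally $\proj{\mu}{\rho}a^\rho = a^\mu$ because $\bm a$ is spatial, yielding the $-a^\mu(k_{\alpha\beta}-k_{\beta\alpha})$ term of \eqref{eq:cod_rel.I}.

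The only real obstacle is bookkeeping: carefully tracking the signs through the successive substitutions of \eqref{eq:comp.prop.ext_curv.II} and verifying that, precisely because $\bm k$ and $\bm a$ are spatial, the projected ambient covariant derivatives convert cleanly into spatial ones with no residual terms. No ingredient beyond the Ricci identity for $\bm u$ and the decomposition \eqref{eq:prop.ext_curv.II} is required — in particular, no appeal to the non-symmetry of $\bm k$ is needed beyond the passive observation that the $k_{\alpha\beta}-k_{\beta\alpha}$ combination does not simplify further.
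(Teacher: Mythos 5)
Your proposal is correct and follows exactly the paper's route: the paper's own proof is the one-line instruction to fully project the Ricci identity \eqref{eq:prop.riem} written for $\bm u$ onto the spatial frames and then use \eqref{eq:comp.prop.ext_curv.II} together with \eqref{eq:comp.3connect}, which is precisely the computation you carry out. Your sign bookkeeping and the observation that the $\connect\bm a$ and $a_\cdot u_\cdot$ terms are killed by the projectors are both accurate.
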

\begin{proof}
	This equation is obtained by fully projecting \eqref{eq:prop.riem} written for $\bm u$ onto the spatial frames, and by using 
	\eqref{eq:comp.prop.ext_curv.II} together with \eqref{eq:comp.3connect}.
\end{proof}
\begin{remark} \label{rem:cod}
	The Codazzi relation \eqref{eq:cod_rel.I} is anti-symmetric on the indices $\alpha$ and $\beta$.
\end{remark}
%

% _____ Subsubsection _____
\subsubsection{Variation of the spatial metric}

We now search for the evolution equation of the spatial metric along the congruence. (This will be of use in the sequel for the derivation 
of the Ricci equation and its reformulation.) We introduce to this aim the temporal \textit{evolution vector} $\bm m$, such that 
\beqNo
	\bm m := \shl \bm u \, .
\eeqNo
$\shl$ stands for the (positive) threading lapse function; it allows for the choice of the pace of evolution along the integral curves of 
$\scC$. (Further information about $\shl$ can be found in Appendix \ref{app:bases&coord}.)

%
% Vertical space for a better display.
%
\medskip
\begin{remark}
	Even in the case where we regard spatial tensor fields we work within a four-dimensional manifold. Hence there is no `evolution' 
	of spatial quantities per se, but rather a variation along a preferred time-like direction.
\end{remark}
%
% Vertical space for a better display.
%
\smallskip
\begin{propositionS} \label{prop:evol.3met}
	The evolution of the spatial metric along the congruence of curves is given by 
	\beq \label{eq:evol.3met.cov}
		\CL_{\bm m} \gamma_\albe 
			= - \shl \big( k_\albe + k_{\beta \alpha} \big) 
	\eeq
	for its covariant components, 
	\beq \label{eq:evol.3met.contrav}
		\CL_{\bm m} \gamma^\albe =
			\shl \, \Big( k^\albe + k^{\beta \alpha} \Big) 
			+ \shl u^\alpha \Big( a^\beta - \frac{D^\beta \shl}{\shl} \Big) 
			+ \shl u^\beta \Big(  a^\alpha - \frac{D^\alpha \shl}{\shl} \Big) 
	\eeq
	for its contravariant components, and 
	\beq \label{eq:evol.3met.mixed}
		\CL_{\bm m} \gamma^\alpha_{\phantom{\alpha} \beta}
			= \shl u^\alpha \Big( a_\beta - \frac{D_\beta \shl}{\shl} \Big) 
	\eeq
	for its mixed components.
\end{propositionS}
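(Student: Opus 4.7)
The plan is to compute each of the three Lie derivatives by direct expansion, starting from the decompositions $\gamma_\albe = g_\albe + u_\alpha u_\beta$, $\gamma^\albe = g^\albe + u^\alpha u^\beta$ and $\gamma^\alpha_{\phantom{\alpha}\beta} = \kroneck{\alpha}{\beta} + u^\alpha u_\beta$ (cf.\ Eqs.~\eqref{eq:comp.3metric} and \eqref{eq:comp.orth_proj}) and applying $\CL_{\bm m}$ via the Leibniz rule. The two principal ingredients are the Killing-type identity $\CL_{\bm m} g_\albe = m_{\alpha;\beta} + m_{\beta;\alpha}$ (valid because $\connect$ is torsion-free and metric-compatible), together with its contravariant counterpart $\CL_{\bm m} g^\albe = -(m^{\alpha;\beta} + m^{\beta;\alpha})$, and the formula \eqref{eq:comp.prop.ext_curv.II} expressing $\connect \bm u^\flat$ in terms of $\bm k$ and $\bm a^\flat$. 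The special form $\bm m = \shl \bm u$ then turns every covariant derivative of $\bm m$ into a combination of $\shl$, its gradient, $u_\cdot$, $k_{\cdot \cdot}$ and $a_\cdot$.

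As a preliminary I would establish, once and for all, the two compact identities
\beqNo
	\CL_{\bm m} u_\alpha = \shl \, a_\alpha - \shl_{,\alpha} \, , \qquad
	\CL_{\bm m} u^\alpha = - \left( u^\gamma \shl_{,\gamma} \right) u^\alpha \, .
\eeqNo
They follow from inserting $m_\alpha = \shl u_\alpha$ (resp.\ $m^\alpha = \shl u^\alpha$) into the Lie-derivative formula \eqref{eq:comp.lie_tens}, expanding $m^\gamma u_{\alpha;\gamma}$ via \eqref{eq:comp.prop.ext_curv.II}, and exploiting the unit-norm consequence $u^\gamma u_{\gamma;\alpha} = 0$ together with the spatial character of $\bm k$ (so that $u^\gamma k_{\alpha \gamma} = 0$) and the definition $a^\alpha = u^\gamma u^\alpha_{\phantom{\alpha} ; \gamma}$. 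These two auxiliary identities carry essentially all the subsequent work.

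For \eqref{eq:evol.3met.cov}, $\CL_{\bm m}(u_\alpha u_\beta)$ is obtained from the first identity by Leibniz, and adding it to $\CL_{\bm m} g_\albe = m_{\alpha;\beta} + m_{\beta;\alpha}$ produces systematic cancellations: every term of the form $\shl_{,\alpha} u_\beta$ and $\shl a_\alpha u_\beta$ disappears in pairs, leaving precisely $-\shl (k_\albe + k_{\beta \alpha})$. The contravariant formula \eqref{eq:evol.3met.contrav} is structurally identical, but the leftover non-$k$ terms must be re-assembled as spatial gradients: noting that $D^\alpha \shl = \gamma^{\alpha \gamma} \shl_{,\gamma} = \shl^{,\alpha} + u^\alpha (u^\gamma \shl_{,\gamma})$, the residue $-u^\alpha \shl^{,\beta} - u^\beta \shl^{,\alpha} - 2 (u^\gamma \shl_{,\gamma}) u^\alpha u^\beta$ rearranges exactly into $-u^\alpha D^\beta \shl - u^\beta D^\alpha \shl$, producing the stated form. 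The mixed case \eqref{eq:evol.3met.mixed} is the shortest of the three: since $\CL_{\bm m} \kroneck{\alpha}{\beta} = 0$, only $\CL_{\bm m}(u^\alpha u_\beta)$ contributes, and the same identification $D_\beta \shl = \shl_{,\beta} + u_\beta (u^\gamma \shl_{,\gamma})$ closes the proof.

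The main obstacle is not conceptual but combinatorial: the intermediate steps contain a large number of terms involving $\shl_{,\alpha}$, $u_\alpha$, $a_\alpha$ and their contractions, and the crux is recognising at the end that what looks like irreducible residue in the sharp and mixed cases is in fact the spatial projection of $\connect \shl$ in disguise. Computing $\CL_{\bm m} \bm u^\flat$ and $\CL_{\bm m} \bm u$ once at the outset is precisely what turns the bookkeeping from painful into mechanical.
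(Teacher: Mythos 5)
Your proposal is correct and follows essentially the same route as the paper, whose proof simply invokes the Lie-derivative component formula \eqref{eq:comp.lie_tens}, the decomposition $\gamma_\albe = g_\albe + u_\alpha u_\beta$ and the relation $u_{\alpha\,;\,\beta} = -k_\albe - a_\alpha u_\beta$ of Eq.~\eqref{eq:comp.prop.ext_curv.II}. Your auxiliary identities $\CL_{\bm m} u_\alpha = \shl a_\alpha - \shl_{,\alpha}$ and $\CL_{\bm m} u^\alpha = -(u^\gamma \shl_{,\gamma})\, u^\alpha$ are a sound and efficient way of organising exactly that computation, and your reassembly of the residues into $D^\alpha\shl = \gamma^{\alpha\gamma}\shl_{,\gamma}$ checks out.
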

\begin{proof}
	Each of these expressions is obtained by means of Eqs.~\eqref{eq:comp.lie_tens}, \eqref{eq:comp.3metric} and 
	\eqref{eq:comp.prop.ext_curv.II}.
\end{proof}
\begin{remark}
	No particular expression can be written for the curvature vector when working in arbitrary bases, on the contrary to the slicing 
	approach (see, e.g., \cite{gourg:book}). In Appendix \ref{app:bases&coord}, we provide an expression in a coordinate basis adapted 
	to the congruence.
\end{remark}
%

% _____ Subsubsection _____
\subsubsection{Ricci equation}

\begin{propositionS} \label{prop:ricci_rel}
	In the 1+3 formalism the Ricci equation is given by 
	\beq \label{eq:ricci_rel.I}
		\gamma_{\alpha \rho} u^\sigma \proj{\gamma}{\beta} u^\delta 
				\, {}^{4\!}R^\rho_{\phantom{\rho} \sigma \gamma \delta}
		= \frac{1}{\shl} \CL_{\bm m} k_\albe
				+ k_{\gamma \alpha} k^\gamma_{\phantom{\gamma} \beta} 
				+ a_{\alpha \, || \, \beta} + a_\alpha a_\beta \, .
	\eeq
\end{propositionS}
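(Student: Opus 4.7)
The plan is to start from the Ricci identity \eqref{eq:prop.riem} applied to the flow vector $\bm u$, namely $u^\sigma \, {}^{4\!}R^\rho_{\phantom{\rho} \sigma \gamma \delta} = u^\rho_{\phantom{\rho} ; \, \delta \, ; \, \gamma} - u^\rho_{\phantom{\rho} ; \, \gamma \, ; \, \delta}$, and then to contract with $u^\delta$ in order to produce the doubly-$\bm u$-traced four-Riemann tensor appearing on the left-hand side of \eqref{eq:ricci_rel.I}. Using the Leibniz rule, the first term is rewritten as $u^\delta \connect_\gamma \connect_\delta u^\rho = \connect_\gamma a^\rho - (\connect_\gamma u^\delta)(\connect_\delta u^\rho)$, where $\bm a = \connect_{\bm u} \bm u$ is the curvature vector of $\scC$; the second term is expanded directly by substituting $\connect_\gamma u^\rho = -k^\rho_{\phantom{\rho} \gamma} - a^\rho u_\gamma$ from \eqref{eq:comp.prop.ext_curv.II}, and by using $u^\delta \connect_\delta u_\gamma = a_\gamma$.

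Making repeated use of the spatial character of $\bm k$ and $\bm a$ (so that contractions such as $k^\rho_{\phantom{\rho} \gamma} u^\gamma$ and $a^\gamma u_\gamma$ vanish), the subtraction reduces to $\connect_\gamma a^\rho - k^\delta_{\phantom{\delta} \gamma} k^\rho_{\phantom{\rho} \delta} + u^\delta \connect_\delta k^\rho_{\phantom{\rho} \gamma} + a^\rho a_\gamma$ up to terms proportional to $u_\gamma$. I would then project with $\gamma_{\alpha \rho} \proj{\gamma}{\beta}$: the $u_\gamma$ pieces drop out, $\connect_\gamma a^\rho$ projects to the spatial covariant derivative $a_{\alpha \, || \, \beta}$ (because $\bm a$ is spatial), $a^\rho a_\gamma$ projects to $a_\alpha a_\beta$, and $-k^\delta_{\phantom{\delta} \gamma} k^\rho_{\phantom{\rho} \delta}$ projects to $-k_{\alpha \gamma} k^\gamma_{\phantom{\gamma} \beta}$. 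This last term will be cancelled by a matching piece generated in the step below.

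The delicate step is rewriting the projection of $u^\delta \connect_\delta k^\rho_{\phantom{\rho} \gamma}$ as a Lie derivative along $\bm u$. Pushing $\gamma_{\alpha \rho}$ and $\proj{\gamma}{\beta}$ through the covariant derivative produces additional contributions involving $\connect u^\rho$ and $\connect u^\gamma$, which are again traded for $\bm k$ and $\bm a$ via \eqref{eq:comp.prop.ext_curv.II}. Matching the outcome against the component formula \eqref{eq:comp.lie_tens} for $\CL_{\bm u} k_\albe$ --- which itself contains two $k^2$-pieces together with $u_\alpha$- and $u_\beta$-parallel corrections --- one verifies that the spurious $u$-parallel contributions cancel pairwise thanks to $u^\sigma u_\sigma = -1$, $k_{\sigma \gamma} u^\sigma = 0$ and $a^\sigma u_\sigma = 0$. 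What remains is $\proj{\gamma}{\beta} \gamma_{\alpha \rho} u^\delta \connect_\delta k^\rho_{\phantom{\rho} \gamma} = \CL_{\bm u} k_\albe + k_{\gamma \beta} k^\gamma_{\phantom{\gamma} \alpha} + k_{\alpha \gamma} k^\gamma_{\phantom{\gamma} \beta}$. Substituting back, the last piece cancels against the $-k_{\alpha \gamma} k^\gamma_{\phantom{\gamma} \beta}$ from the previous paragraph, and only $k_{\gamma \beta} k^\gamma_{\phantom{\gamma} \alpha}$ survives; this equals $k_{\gamma \alpha} k^\gamma_{\phantom{\gamma} \beta}$ by the symmetry of $g^{\gamma \delta}$ (irrespective of the symmetry of $\bm k$).

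To finish, I would convert $\CL_{\bm u}$ into $\CL_{\bm m} / \shl$: since $\bm k$ is spatial and $\bm m = \shl \bm u$, applying \eqref{eq:comp.lie_tens} together with $k_{\gamma \beta} u^\gamma = k_{\alpha \gamma} u^\gamma = 0$ yields $\CL_{\bm m} k_\albe = \shl \, \CL_{\bm u} k_\albe$, which supplies the prefactor $1/\shl$. The hardest part will be the bookkeeping in the Lie-derivative identification: many $u_\alpha$-, $u_\beta$-, $u_\gamma$-parallel pieces are generated both by commuting projectors through $\connect$ and by unpacking $\CL_{\bm u}$, and one must verify that they cancel in pairs; once that is done, the match with the right-hand side of \eqref{eq:ricci_rel.I} is immediate.
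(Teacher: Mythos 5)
Your proposal is correct and follows essentially the same route as the paper: both project the Ricci identity \eqref{eq:prop.riem} written for $\bm u$ twice onto the spatial frames and once along the congruence, trade $\connect \bm u^\flat$ for $\bm k$ and $\bm a$ via \eqref{eq:comp.prop.ext_curv.II}, and identify the projected $u^\delta \connect_\delta \bm k$ term with the Lie derivative along $\bm m$ up to the quadratic-in-$\bm k$ corrections. The only cosmetic difference is that the paper handles the projector/Lie-derivative bookkeeping by applying \eqref{eq:evol.3met.mixed} to $\CL_{\bm m} k_\albe$ directly, whereas you pass through $\CL_{\bm u}$ and then rescale by $\shl$.
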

\begin{proof}
	Projecting Eq.~\eqref{eq:prop.riem} written for $\bm u$ twice onto the spatial frames and once onto the congruence, we obtain 
	with the help of Eq.~\eqref{eq:comp.prop.ext_curv.II}
	\beq \label{eq:dev.evol.ext_curv.I}
		\gamma_{\alpha \rho} u^\sigma \proj{\gamma}{\beta} u^\delta 
			\, {}^{4\!}R^\rho_{\phantom{\rho} \sigma \gamma \delta} 
		= \gamma_{\alpha \rho} \proj{\gamma}{\beta} \, u^\delta \, k^\rho_{\phantom{\rho} \gamma \, ; \, \delta} 
			- k_{\alpha \gamma} k^\gamma_{\phantom{\gamma} \beta} 
			+ a_{\alpha \, || \, \beta} + a_\alpha a_\beta \, .
	\eeq
	On the other hand, we decompose from Eq.~\eqref{eq:comp.lie_tens} the orthogonal projection of the Lie derivative of $\bm k$ 
	along the evolution vector into 
	\beqNo
		\proj{\gamma}{\alpha} \proj{\delta}{\beta} \, \CL_{\bm m} k_{\gamma \delta} 
			= \shl \proj{\gamma}{\alpha} \proj{\delta}{\beta} 
				\, u^\lambda \, k_{\gamma \delta \, ; \, \lambda}
				- \shl k^\gamma_{\phantom{\gamma} \beta} \left( k_{\alpha \gamma} + k_{\gamma \alpha} \right) \, . 
	\eeqNo
	The left-hand side of this relation is reformulated by means of \eqref{eq:evol.3met.mixed} and the spatial character of $\bm k$ as 
	\beqNo
		 \proj{\gamma}{\alpha} \proj{\delta}{\beta} \, \CL_{\bm m} k_{\gamma \delta} 
		 	= \CL_{\bm m} k_\albe \, . 
	\eeqNo
	We use the resulting expression to substitute the first term of the right-hand side of \eqref{eq:dev.evol.ext_curv.I}, and we 
	conclude the proof.
\end{proof}
%

% -------------------- Subsection --------------------
\subsection{\texorpdfstring{$1+3$}{1+3} form of the Ricci tensor} \label{subsec:proj.ricci}

% _____ Subsubsection _____
\subsubsection{Contracted Gauss equation and scalar Gauss equation}

\begin{corollaryS} \label{cor:gauss_rel}
	In the 1+3 formalism the contracted Gauss relation is given by 
	\beq \label{eq:gauss_rel.II}
		\proj{\gamma}{\alpha} \proj{\delta}{\beta} {}^{4\!}R_{\gamma \delta} 
			+ \proj{\gamma}{\alpha} \proj{\delta}{\beta} u^\sigma u_\rho {}^{4\!}R^\rho_{\phantom{\rho} \gamma \sigma \delta} 
			= R_{\alpha \beta} + k k_{\alpha \beta} - k^\gamma_{\phantom{\gamma} \beta} k_{\alpha \gamma} \, , 
	\eeq
	and the scalar Gauss relation reads 
	\beq \label{eq:gauss_rel.III}
		{}^{4\!}R + 2 \, u^\alpha u^\beta \, {}^{4\!}R_\albe 
			= R + k^2 - k_\albe k^{\beta \alpha} \, . 
	\eeq
\end{corollaryS}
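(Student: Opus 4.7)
The plan is to derive both relations by taking appropriate traces of the Gauss equation \eqref{eq:gauss_rel.I} established in Proposition~\ref{prop:gauss_rel}. Throughout, I will exploit the idempotence of the spatial projector, $\proj{\mu}{\rho}\proj{\gamma}{\mu} = \proj{\gamma}{\rho}$, the decomposition $\proj{\gamma}{\rho} = \kroneck{\gamma}{\rho} + u^\gamma u_\rho$ from \eqref{eq:comp.orth_proj}, the definitions of the four- and three-dimensional Ricci tensors, and the usual symmetries of ${}^{4\!}\bm{Riem}$.

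For the contracted Gauss relation \eqref{eq:gauss_rel.II}, the first step is to contract the indices $\mu$ and $\alpha$ in \eqref{eq:gauss_rel.I}. On the right-hand side this immediately produces $R_{\nu\beta}$ by definition \eqref{eq:def.3ricci}, together with the combination $k\,k_{\nu\beta} - k^\mu{}_\beta k_{\nu\mu}$ (where $k := k^\mu{}_\mu$). On the left-hand side, the two projectors on $\mu$ and $\alpha$ collapse by idempotence into a single $\proj{\gamma}{\rho}$, which I then split as $\kroneck{\gamma}{\rho} + u^\gamma u_\rho$. The first piece contracts the four-Riemann on its first and third indices, producing ${}^{4\!}R_{\gamma\delta}$, while the second piece yields the correction $\proj{\sigma}{\nu}\proj{\delta}{\beta}u^\gamma u_\rho \, {}^{4\!}R^\rho{}_{\gamma\sigma\delta}$ that appears on the left-hand side of \eqref{eq:gauss_rel.II} after relabeling $(\sigma,\delta,\nu)$ to $(\gamma,\delta,\alpha)$ and swapping the role of the two indices contracted with $u$.

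For the scalar Gauss relation \eqref{eq:gauss_rel.III}, I will take a further trace of \eqref{eq:gauss_rel.II} with $\gamma^{\alpha\beta}$, again using the decomposition $\gamma^{\alpha\beta} = g^{\alpha\beta} + u^\alpha u^\beta$ from \eqref{eq:comp.3metric}. On the right-hand side this immediately gives $R + k^2 - k_{\albe}k^{\beta\alpha}$ since all spatial tensors are insensitive to the $u^\alpha u^\beta$ correction. On the left-hand side, the term $\gamma^{\alpha\beta}\proj{\gamma}{\alpha}\proj{\delta}{\beta}{}^{4\!}R_{\gamma\delta}$ collapses to $\gamma^{\gamma\delta}\,{}^{4\!}R_{\gamma\delta} = {}^{4\!}R + u^\gamma u^\delta\,{}^{4\!}R_{\gamma\delta}$. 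The key observation for the remaining term is that, after contraction with $\gamma^{\alpha\beta}$, the part with $u^\alpha u^\beta$ inserts two factors of $u$ in the last-index pair of ${}^{4\!}R^\rho{}_{\gamma\sigma\delta}$, which vanishes by antisymmetry in $(\sigma,\delta)$; only the $g^{\gamma\delta}$ contribution survives, and after using the symmetries of the four-Riemann it reduces to $u^\sigma u^\rho\,{}^{4\!}R_{\rho\sigma}$. Combining this with the $u^\gamma u^\delta{}^{4\!}R_{\gamma\delta}$ already obtained produces the factor of $2$ in \eqref{eq:gauss_rel.III}.

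The only delicate step is the reduction $g^{\gamma\delta}{}^{4\!}R^\rho{}_{\gamma\sigma\delta} = {}^{4\!}R^\rho{}_\sigma$, which requires chaining the antisymmetry in the first pair, the antisymmetry in the last pair, and the pair-exchange symmetry of the Riemann tensor. This is a purely algebraic manipulation, but it is where sign errors are most likely, so I would verify it carefully before assembling the final expression. Everything else is mechanical bookkeeping of projectors and traces.
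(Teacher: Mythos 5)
Your proposal is correct and follows exactly the route of the paper's (very terse) proof: contract the Gauss relation \eqref{eq:gauss_rel.I} on its first and third indices, use the idempotence of $\orthproj$ and the split $\proj{\gamma}{\rho}=\kroneck{\gamma}{\rho}+u^\gamma u_\rho$ to produce the two terms on the left of \eqref{eq:gauss_rel.II}, then trace to get \eqref{eq:gauss_rel.III}. The details you supply — the vanishing of the $u^\gamma u^\delta$ piece by antisymmetry of the last index pair and the reduction $g^{\gamma\delta}\,{}^{4\!}R^\rho{}_{\gamma\sigma\delta}={}^{4\!}R^\rho{}_\sigma$ via the Riemann symmetries — are the right bookkeeping and check out.
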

\begin{proof}
	The first equation is obtained by contracting \eqref{eq:gauss_rel.I} on the first and third indices and by using the idempotence 
	of the orthogonal projector together with \eqref{eq:comp.orth_proj}. The second equality stems from the trace of \eqref{eq:gauss_rel.II}.
\end{proof}
%

% _____ Subsubsection _____
\subsubsection{Contracted Codazzi equation}

\begin{corollaryS} \label{cor:cod_rel}
	In the 1+3 formalism the contracted Codazzi relation is written 
	\beq \label{eq:cod_rel.II}
		\proj{\gamma}{\alpha} u^\delta \, {}^{4\!}R_{\gamma \delta} 
	 		= k_{\, | \, \alpha} - k^\gamma_{\phantom{\gamma} \alpha \, || \, \gamma} 
			- a^\gamma \left( k_{\gamma \alpha} - k_{\alpha \gamma} \right) \, .
	\eeq
\end{corollaryS}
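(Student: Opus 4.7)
My plan is to derive the contracted Codazzi relation directly by tracing the Codazzi equation of Proposition \ref{prop:cod_rel}, in exact analogy with how the contracted Gauss relation is obtained in Corollary \ref{cor:gauss_rel}. First I would contract the free upper index $\mu$ in \eqref{eq:cod_rel.I} with the lower index $\alpha$, which on the left-hand side produces the combination
\beqNo
	\proj{\mu}{\rho} \proj{\gamma}{\mu} \, u^\sigma \proj{\delta}{\beta}
		\, {}^{4\!}R^\rho_{\phantom{\rho} \sigma \gamma \delta} \, .
\eeqNo
Using the idempotence of the orthogonal projector, $\proj{\mu}{\rho} \proj{\gamma}{\mu} = \proj{\gamma}{\rho}$, together with \eqref{eq:comp.orth_proj} in the form $\proj{\gamma}{\rho} = \kroneck{\gamma}{\rho} + u^\gamma u_\rho$, the expression splits into a term yielding the four-Ricci tensor (through the usual definition of $\bm R$ as a trace of $\bm{Riem}$) and a term of the form $u^\gamma u_\rho u^\sigma \, {}^{4\!}R^\rho_{\phantom{\rho} \sigma \gamma \delta} \proj{\delta}{\beta}$.

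The second term vanishes by the antisymmetry ${}^{4\!}R_{\rho \sigma \gamma \delta} = - {}^{4\!}R_{\sigma \rho \gamma \delta}$: the contraction $u_\rho u^\sigma$ is symmetric in $(\rho, \sigma)$ and so kills the antisymmetric pair. This leaves $u^\sigma \proj{\delta}{\beta} \, {}^{4\!}R_{\sigma \delta}$ on the left-hand side, which, upon using the symmetry of ${}^{4\!}R_\albe$ and relabelling the dummy indices, coincides with $\proj{\gamma}{\alpha} u^\delta \, {}^{4\!}R_{\gamma \delta}$ stated in the corollary.

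On the right-hand side, the same trace $\mu \leftrightarrow \alpha$ produces $k^\mu_{\phantom{\mu} \mu \, || \, \beta} - k^\mu_{\phantom{\mu} \beta \, || \, \mu} - a^\mu (k_{\mu \beta} - k_{\beta \mu})$. For the first term I would invoke the compatibility of $\connectf$ with the spatial metric $\bm \gamma$, which implies that the spatial trace commutes with the spatial covariant derivative, so that $k^\mu_{\phantom{\mu} \mu \, || \, \beta} = k_{\, | \, \beta}$ by \eqref{eq:comp.scd_function}. Renaming $\beta \to \alpha$ and $\mu \to \gamma$ yields \eqref{eq:cod_rel.II}.

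There is no real obstacle here: the whole argument is an index-contraction exercise on Proposition \ref{prop:cod_rel}. The only subtlety worth flagging is the vanishing of the $u^\gamma u_\rho$ piece, which relies on the antisymmetry of ${}^{4\!}\bm{Riem}$ in its first two slots and not on any feature special to the threading framework.
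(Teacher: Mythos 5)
Your proposal is correct and follows exactly the paper's route: the paper's proof is simply ``contract Eq.~\eqref{eq:cod_rel.I} on the indices $\mu$ and $\alpha$,'' and your expansion of that contraction (idempotence of $\orthproj$, the vanishing of the $u^\gamma u_\rho$ piece by antisymmetry of ${}^{4\!}\bm{Riem}$ in its first two slots, and $k^\mu_{\phantom{\mu}\mu\,||\,\beta}=k_{\,|\,\beta}$ via metric compatibility of $\connectf$) is accurate. No gaps.
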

\begin{proof}
	This expression is derived by contracting Eq.~\eqref{eq:cod_rel.I} on the indices $\mu$ and $\alpha$.
\end{proof}
\begin{remark}
	The contraction of \eqref{eq:cod_rel.I} on the indices $\mu$ and $\beta$ yields the same expression as \eqref{eq:cod_rel.II} 
	(cf.\ Remark~\ref{rem:cod}).
\end{remark}
%

% _____ Subsubsection _____
\subsubsection{Reformulation of the Ricci equation}

\begin{corollaryS} \label{cor:ricci_rel}
	In the 1+3 formalism the Ricci equation can be alternatively written as 
	\beq \label{eq:ricci_rel.II}
		\proj{\mu}{\alpha} \proj{\nu}{\beta} {}^{4\!}R_\mnu 
			= R_\albe + k k_\albe 
			- k^\gamma_{\phantom{\gamma} \beta} \left( k_{\alpha \gamma} + k_{\gamma \alpha} \right) 
			- \frac{1}{\shl} \CL_{\bm m} k_\albe - a_{\alpha \, || \, \beta} - a_\alpha a_\beta \, . 
	\eeq
\end{corollaryS}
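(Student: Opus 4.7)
The strategy is to combine the contracted Gauss equation \eqref{eq:gauss_rel.II} with the Ricci equation \eqref{eq:ricci_rel.I}. Both relations involve a fully-projected-or-fully-traced piece of the four-Riemann tensor; by showing that the auxiliary Riemann term appearing in the contracted Gauss relation is exactly the one controlled by the Ricci relation, one can eliminate it and recover a clean statement involving only $\proj{\mu}{\alpha}\proj{\nu}{\beta}\,{}^{4\!}R_{\mnu}$ and spatial quantities.

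Concretely, the plan is the following. First I would isolate $\proj{\gamma}{\alpha}\proj{\delta}{\beta}\,{}^{4\!}R_{\gamma\delta}$ from \eqref{eq:gauss_rel.II}, moving the term $\proj{\gamma}{\alpha}\proj{\delta}{\beta}\,u^\sigma u_\rho\,{}^{4\!}R^\rho_{\phantom{\rho}\gamma\sigma\delta}$ to the right-hand side. Second, I would massage this term by raising the $\rho$ index and using the symmetries of the four-Riemann tensor (antisymmetry on the first pair and antisymmetry on the last pair, applied successively) to rearrange the slots into the pattern $(\text{projector on }\alpha,\, u,\, \text{projector on }\beta,\, u)$. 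This puts it in precisely the form that appears on the left-hand side of the Ricci relation \eqref{eq:ricci_rel.I}, after using $\gamma_{\alpha\rho}g^{\rho\lambda}=\proj{\lambda}{\alpha}$ to reconcile the index placement in \eqref{eq:ricci_rel.I}. Third, I would substitute the right-hand side of \eqref{eq:ricci_rel.I} for that term and simplify; combining $k_{\gamma\alpha}k^{\gamma}_{\phantom{\gamma}\beta}$ coming from Ricci with $-k^{\gamma}_{\phantom{\gamma}\beta}k_{\alpha\gamma}$ coming from Gauss yields the compact factor $-k^{\gamma}_{\phantom{\gamma}\beta}(k_{\alpha\gamma}+k_{\gamma\alpha})$ displayed in \eqref{eq:ricci_rel.II}.

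The only step that is not mere bookkeeping is the identification of the two Riemann expressions. The subtlety is that in \eqref{eq:gauss_rel.II} the Riemann tensor is contracted as $u^\sigma u_\rho\,{}^{4\!}R^{\rho}_{\phantom{\rho}\gamma\sigma\delta}$, with the two $u$'s sitting in slots 1 and 3, while \eqref{eq:ricci_rel.I} has them in slots 2 and 4. One must therefore be careful to verify that moving the $u$'s across using only the antisymmetries (rather than the pair-exchange symmetry) produces no extra sign, and that the surviving projectors land on the correct free indices $\alpha$ and $\beta$. Once this index-juggling is checked, no further computation is needed: the proof reduces to substitution and grouping of terms, exactly in the spirit of Corollary \ref{cor:gauss_rel}.
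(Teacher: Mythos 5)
Your proposal is correct and matches the paper's proof, which likewise obtains \eqref{eq:ricci_rel.II} by combining the Ricci relation \eqref{eq:ricci_rel.I} with the contracted Gauss relation \eqref{eq:gauss_rel.II}; the index-juggling you flag (two antisymmetries giving no net sign, and $\gamma_{\alpha\rho}g^{\rho\lambda}=\proj{\lambda}{\alpha}$) checks out. The paper states this combination without spelling out those details, so your write-up is simply a more explicit version of the same argument.
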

\begin{proof}
	This expression is obtained by combining Eq.~\eqref{eq:ricci_rel.I} with Eq.~\eqref{eq:gauss_rel.II}.
\end{proof}
\begin{remark}
	Taking the trace of \eqref{eq:ricci_rel.I} and using \eqref{eq:evol.3met.contrav} and \eqref{eq:gauss_rel.III} on the outcome, we get 
	\beqNo
		{}^{4\!}R 
			= R + k^2 + k_\albe k^{\beta \alpha} - \frac{2}{M} \CL_{\bm m} k
			- 2 a^\alpha_{\phantom{\alpha} || \, \alpha} - 2 \, a^\alpha a_\alpha \, . 
	\eeqNo
	(This relation is useful for instance for the derivation of the 1+3 form of the Hilbert action.)
\end{remark}
%

%%%%%%%%%%%%%%%%%% SECTION %%%%%%%%%%%%%%%%%%
\section{\texorpdfstring{$1+3$}{1+3} form of Einstein's equation \label{sec:1+3.efe}}

We present in this section the 1+3 formulation of the Einstein field equation 
\beq \label{eq:efe}
	{}^{4\!}\bm R - \frac{1}{2} {}^{4\!}R \, \bm g + \Lambda \, \bm g
		= 8 \pi \bm T \, .
\eeq
$\bm T$ denotes the stress--energy tensor of the fluid filling the space--time, and $\Lambda$ is the cosmological constant that we carry 
along for the sake of generality. (We use geometric units, for which $G = c = 1$.)

We identify in what follows the four-dimensional Lo\-rent\-zian manifold to the physical space--time, the integral curves of the congruence 
to the world lines of the fluid and the local spatial frames to the (instantaneous) rest-frames of the fluid.

The 1+3 decomposition of the space--time Ricci tensor is given in the above section; to provide the mentioned reformulation, we are left 
with the derivation of the 1+3 form of the stress--energy tensor.

% -------------------- Subsection --------------------
\subsection{\texorpdfstring{$1+3$}{1+3} form of the stress--energy tensor}

The stress--energy tensor of the fluid is decomposed in its rest-frames according to 
\beqNo
	\bm T 
		= \epsilon \, \bm u^\flat \otimes \bm u^\flat 
		+ \bm q \otimes \bm u^\flat + \bm u^\flat \otimes \bm q 
		+ p \, \bm \gamma + \bm \zeta \, . 
\eeqNo
It reads in component form 
\beqNo
	T_\mnu 
		= \epsilon \, u_\alpha u_\beta + 2 \, q_{(\alpha} u_{\beta)} + p \, \gamma_\albe + \pi_\albe \, , 
\eeqNo
where we have defined 
\beq \label{eq:def.dyn_fluid}
	\epsilon 
		:= u^\alpha u^\beta T_\albe \, , \qquad 
	q_\alpha 
		:= - \gamma^\mu_{\phantom{\mu} \alpha} u^\nu T_\mnu \, , \qquad
	p \, \gamma_\albe + \zeta_\albe 
		:= \gamma^\mu_{\phantom{\mu} \alpha} \gamma^\nu_{\phantom{\nu} \beta} T_\mnu \, .
\eeq
$\epsilon$ stands for the energy density of the fluid, $p$ for its isotropic pressure, $\bm q$ defines its (spatial) heat vector and $\bm \zeta$ 
its (spatial, symmetric and traceless) anisotropic stress tensor.

The trace of the stress--energy tensor is given by 
\beq \label{eq:trace.se_tens}
	T = - \epsilon + 3 p \, . 
\eeq
%

% -------------------- Subsection --------------------
\subsection{\texorpdfstring{$1+3$}{1+3} form of Einstein's equation}

The trace of Einstein's equation yields the relation 
\beq \label{eq:trace.4ricci}
	{}^{4\!}R 
		= - 8 \pi T + 4 \Lambda \, , 
\eeq
which, inserted back into \eqref{eq:efe}, drives 
\beqNo
	{}^{4\!}\bm R 
		= 8 \pi \, \Big( \bm T - \frac{1}{2} \, T \bm g \Big) + \Lambda \bm g \, , 
\eeqNo
and in component form 
\beqNo
	{}^{4\!}R_\albe 
		= 8 \pi \, \Big( T_\albe - \frac{1}{2} \, T g_\albe \Big) + \Lambda g_\albe \, . 
\eeqNo
We project this last expression on the world lines and the rest-frames of the fluid. From Eqs.~\eqref{eq:def.dyn_fluid} and 
\eqref{eq:trace.se_tens} we get 
\begin{gather}
	u^\alpha u^\beta \, {}^{4\!}R_\albe 
		= 4 \pi \, (\epsilon + 3p) - \Lambda \, , 
		\label{eq:proj.ricci.uu} \\[1pt]
	\proj{\mu}{\alpha} u^\nu \, {}^{4\!}R_\mnu 
		= - 8 \pi q_\alpha \, , 
		\label{eq:proj.ricci.ug} \\[-1pt]
	\; \proj{\mu}{\alpha} \proj{\nu}{\beta} \, {}^{4\!}R_\mnu 
		= 4 \pi \, \Big( ( \epsilon - p ) \, \gamma_\albe + 2 \zeta_\albe \Big) + \Lambda \gamma_\albe 
		\label{eq:proj.ricci.gg} \, . 
\end{gather}
The left-hand sides are then reformulated by way of the above corollaries, and we end up with the following set of relations, equivalent 
to the Einstein field equation (see Section \ref{sec:conclusion} for a discussion).

% _____ Subsubsection _____
\subsubsection{Einstein--Gauss equation}

\begin{propositionE} \label{prop:einstein-gauss}
	In the 1+3 formalism the Einstein--Gauss relation is given by 
	\beq \label{eq:einstein-gauss}
		R + k^2 - k_\albe k^{\beta \alpha} 
			= 16 \pi \epsilon + 2 \Lambda \, . 
	\eeq
\end{propositionE}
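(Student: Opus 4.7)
The plan is to read off this proposition as a direct algebraic consequence of the scalar Gauss relation (Corollary \ref{cor:gauss_rel}, Eq.~\eqref{eq:gauss_rel.III}) combined with the two pieces of information about the four-Ricci tensor that Einstein's equation provides along the congruence direction. Nothing genuinely new needs to be computed: the spatial geometry has already been related to ${}^{4\!}R$ and to $u^\alpha u^\beta \, {}^{4\!}R_{\alpha\beta}$, and Einstein's equation fixes both of these in terms of $\epsilon$, $p$ and $\Lambda$.

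First I would recall the scalar Gauss relation in the form
\beqNo
	{}^{4\!}R + 2 \, u^\alpha u^\beta \, {}^{4\!}R_\albe
		= R + k^2 - k_\albe k^{\beta \alpha} \, ,
\eeqNo
so that the right-hand side of the claim is essentially a fixed combination of the four-Ricci scalar and the double contraction of the four-Ricci tensor with $\bm u$. The left-hand side of this identity is therefore what I need to express in terms of matter quantities.

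Next I would substitute the two ingredients supplied by Einstein's equation: on the one hand, the trace relation \eqref{eq:trace.4ricci} together with the trace of the stress--energy tensor \eqref{eq:trace.se_tens} gives ${}^{4\!}R = 8\pi \epsilon - 24 \pi p + 4 \Lambda$; on the other hand, the $u^\alpha u^\beta$ projection \eqref{eq:proj.ricci.uu} gives $u^\alpha u^\beta \, {}^{4\!}R_\albe = 4 \pi (\epsilon + 3 p) - \Lambda$. Plugging both into the scalar Gauss relation, the $p$-terms cancel between the two contributions and the $\Lambda$-terms combine to $2 \Lambda$, which yields precisely $16 \pi \epsilon + 2 \Lambda$ on the right.

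There is essentially no technical obstacle: the only point worth emphasising is that none of the three inputs involves any time derivative of $\bm k$ or of $\bm \gamma$, so the resulting identity is a pure constraint on the spatial frames at a single point of the congruence, in contrast with the Einstein--Ricci equation derived below which inherits the Lie derivative $\CL_{\bm m} k_\albe$ from \eqref{eq:ricci_rel.II}. I would conclude by remarking that, as expected, the threaded Einstein--Gauss constraint has the same algebraic structure as its 3+1 counterpart, the anti-symmetric part of $\bm k$ contributing to $k_\albe k^{\beta\alpha}$ just as the symmetric part does.
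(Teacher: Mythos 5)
Your proposal is correct and follows exactly the paper's own route: the scalar Gauss relation \eqref{eq:gauss_rel.III} combined with the projection \eqref{eq:proj.ricci.uu} and the trace relation \eqref{eq:trace.4ricci} (via \eqref{eq:trace.se_tens}), with the $p$-terms cancelling and the $\Lambda$-terms combining to $2\Lambda$ as you describe. The concluding remarks on the constraint character of the equation and its agreement with the 3+1 counterpart are consistent with the paper's discussion in Section \ref{sec:conclusion}.
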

\begin{proof}
	This expression is derived from Eq.~\eqref{eq:proj.ricci.uu}, the scalar Gauss relation \eqref{eq:gauss_rel.III} and 
	Eq.~\eqref{eq:trace.4ricci}.
\end{proof}
\begin{remark}
	Eq.~\eqref{eq:einstein-gauss} is a scalar relation. It is equivalent to the two-temporal projection \eqref{eq:proj.ricci.uu}.
\end{remark}
%

% _____ Subsubsection _____
\subsubsection{Einstein--Codazzi equation}

\begin{propositionE} \label{prop:einstein-codazzi}
	In the 1+3 formalism the Einstein-Codazzi relation is written 
	\beq \label{eq:einstein-codazzi}
		k^\gamma_{\phantom{\gamma} \alpha \, || \, \gamma} 
			- k_{ \, | \, \alpha} + a^\gamma \left( k_{\gamma \alpha} - k_{\alpha \gamma} \right)
			= 8 \pi q_\alpha \, . 
	\eeq	
\end{propositionE}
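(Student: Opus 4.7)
The plan is to combine two ingredients that have already been established: the contracted Codazzi relation from Corollary \ref{cor:cod_rel} (Eq.~\eqref{eq:cod_rel.II}), which rewrites the mixed spatial/temporal projection $\proj{\gamma}{\alpha} u^\delta \, {}^{4\!}R_{\gamma \delta}$ purely in terms of the extrinsic curvature, the curvature vector, and the spatial covariant derivative, with the matter-side projection \eqref{eq:proj.ricci.ug}, which identifies the same contraction of the four-Ricci tensor with $-8\pi q_\alpha$.

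Concretely, I would first note that Corollary \ref{cor:cod_rel} gives
\beqNo
	\proj{\gamma}{\alpha} u^\delta \, {}^{4\!}R_{\gamma \delta}
		= k_{\, | \, \alpha} - k^\gamma_{\phantom{\gamma} \alpha \, || \, \gamma}
		- a^\gamma \left( k_{\gamma \alpha} - k_{\alpha \gamma} \right) \, ,
\eeqNo
a purely geometric identity. On the other hand, starting from Einstein's equation in trace-reversed form and using the definition of the heat flux $q_\alpha$ from Eq.~\eqref{eq:def.dyn_fluid} together with the trace \eqref{eq:trace.se_tens}, one obtains \eqref{eq:proj.ricci.ug}, namely $\proj{\mu}{\alpha} u^\nu \, {}^{4\!}R_\mnu = -8\pi q_\alpha$. (Note that the cosmological-constant and trace terms proportional to $g_\mnu$ drop out of this projection because $u^\nu g_{\mu\nu} \gamma^\mu_{\phantom{\mu}\alpha} = \gamma^\mu_{\phantom{\mu}\alpha} u_\mu = 0$ by the spatiality of the projector.)

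Equating the two expressions for $\proj{\gamma}{\alpha} u^\delta \, {}^{4\!}R_{\gamma \delta}$ and moving everything onto the appropriate sides gives
\beqNo
	k^\gamma_{\phantom{\gamma} \alpha \, || \, \gamma} - k_{\, | \, \alpha}
	+ a^\gamma \left( k_{\gamma \alpha} - k_{\alpha \gamma} \right)
		= 8 \pi q_\alpha \, ,
\eeqNo
which is the claimed Einstein--Codazzi relation.

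There is no genuine obstacle here: all the labor has been absorbed into Proposition \ref{prop:cod_rel} and Corollary \ref{cor:cod_rel}, where the non-hypersurface-forming character of the threading manifests in the extra antisymmetric piece $a^\gamma(k_{\gamma\alpha}-k_{\alpha\gamma})$. The only point worth flagging explicitly in the write-up is the asymmetry of $k_\albe$, which is responsible for this extra term distinguishing the threading version of Codazzi from its slicing counterpart; beyond that, the derivation is a one-line substitution.
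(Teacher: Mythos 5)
Your proposal is correct and is exactly the paper's argument: Proposition \ref{prop:einstein-codazzi} is obtained by equating the geometric contracted Codazzi relation \eqref{eq:cod_rel.II} with the matter-side projection \eqref{eq:proj.ricci.ug} of the trace-reversed Einstein equation. The sign bookkeeping (including the minus in the definition of $q_\alpha$ and the vanishing of the $g_\mnu$-proportional terms under the mixed projection) is handled correctly.
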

\begin{proof}
	This expression is obtained from Eq.~\eqref{eq:proj.ricci.ug} and the contracted Codazzi relation \eqref{eq:cod_rel.II}.
\end{proof}
\begin{remark}
	Eq.~\eqref{eq:einstein-codazzi} is an expression relating spatial 1-forms. It comprises three independent relations, equivalent to 
	those coming from the one-temporal, one-spatial projection \eqref{eq:proj.ricci.ug}.
\end{remark}
%

% _____ Subsubsection _____
\subsubsection{Einstein--Ricci equation}

\begin{propositionE} \label{prop:einstein-ricci}
	In the 1+3 formalism the Einstein--Ricci relation is given by 
	\begin{align} \label{eq:einstein-ricci}
		\frac{1}{\shl} \CL_{\bm m} k_{(\albe)} =
			& - a_{( \alpha \, || \, \beta )} - a_{ ( \alpha} a_{ \beta ) } + R_{ ( \albe ) } + k k_{ ( \albe ) } 
				- k^\gamma_{\phantom{\gamma} (\beta} \, k_{\alpha) \gamma} 
				- k_{\gamma (\alpha} k^\gamma_{\phantom{\gamma} \beta)} \nonumber \\
			&  - 4 \pi \, \Big( ( \epsilon - p ) \, \gamma_\albe + 2 \, \zeta_\albe \Big) 
				- \Lambda \gamma_\albe \, . 
	\end{align}
\end{propositionE}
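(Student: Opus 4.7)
The strategy is to combine the fully spatial projection of Einstein's equation, Eq.~\eqref{eq:proj.ricci.gg}, with the reformulated Ricci relation in Corollary~\ref{cor:ricci_rel}, and then symmetrize the resulting identity on the index pair $(\alpha,\beta)$. All four ingredients needed in the right-hand side of~\eqref{eq:einstein-ricci} --- the spatial Ricci curvature, the quadratic $\bm k$ terms, the curvature vector and its spatial derivative, and the matter contribution --- already appear in these two equations; no further geometric input is required.

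First I would equate the right-hand side of the Corollary~\ref{cor:ricci_rel} identity~\eqref{eq:ricci_rel.II} with the right-hand side of~\eqref{eq:proj.ricci.gg}, since both expressions represent $\proj{\mu}{\alpha} \proj{\nu}{\beta} {}^{4\!}R_\mnu$. Isolating $(1/\shl)\,\CL_{\bm m} k_\albe$ algebraically yields an identity in which all geometric, kinematical and matter terms sit on the right, but several summands --- notably $R_\albe$, $k k_\albe$, $k^\gamma_{\phantom{\gamma}\beta}(k_{\alpha\gamma}+k_{\gamma\alpha})$ and $\CL_{\bm m} k_\albe$ itself --- are not symmetric in $(\alpha,\beta)$. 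The final step is then to take the symmetric part on $(\alpha,\beta)$ of both sides. The matter pieces $\gamma_\albe$ and $\zeta_\albe$ (the latter symmetric by the decomposition~\eqref{eq:def.dyn_fluid}) are unaffected; the terms $a_{\alpha\,||\,\beta}$, $a_\alpha a_\beta$, $k k_\albe$ and $R_\albe$ symmetrize to their obvious counterparts; and the composite quadratic $k^\gamma_{\phantom{\gamma}\beta}(k_{\alpha\gamma}+k_{\gamma\alpha})$ splits cleanly into $k^\gamma_{\phantom{\gamma}(\beta}k_{\alpha)\gamma} + k_{\gamma(\alpha}k^\gamma_{\phantom{\gamma}\beta)}$, precisely matching the form stated in~\eqref{eq:einstein-ricci}.

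The main obstacle is essentially bookkeeping: since $\bm k$ and $\bm R$ both fail to be symmetric in the threading picture, one has to keep meticulous track of the index ordering when separating the symmetric part of each quadratic expression. A reassuring internal check is that the antisymmetric part of the right-hand side of~\eqref{eq:ricci_rel.II} must vanish identically, given that $\proj{\mu}{\alpha}\proj{\nu}{\beta}{}^{4\!}R_\mnu$ inherits the symmetry of ${}^{4\!}R_\mnu$; this cancellation is controlled by the identity~\eqref{eq:prop.3ricci} for $R_{[\albe]}$ together with the asymmetry of $\bm k$ encoded in~\eqref{eq:prop.ext_curv.I}, and so no extra ingredient is required to land on Proposition~\ref{prop:einstein-ricci}.
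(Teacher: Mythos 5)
Your proposal is correct and follows essentially the same route as the paper: the paper likewise takes the symmetric part of Eq.~\eqref{eq:ricci_rel.II}, uses $\proj{\mu}{(\alpha}\proj{\nu}{\beta)}{}^{4\!}R_\mnu=\proj{\mu}{\alpha}\proj{\nu}{\beta}{}^{4\!}R_\mnu$ (symmetry of the four-Ricci tensor), and combines with Eq.~\eqref{eq:proj.ricci.gg}; whether one symmetrizes before or after equating the two expressions is immaterial. Your side check is also consistent with the paper, which records the antisymmetric part as the purely geometric relation $\frac{1}{\shl}\CL_{\bm m}k_{[\albe]}=-a_{[\alpha\,||\,\beta]}$ in Remark~\ref{rem:einstein-codazzi.II}.
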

\begin{proof}
	The symmetric part of \eqref{eq:ricci_rel.II} is written 
	\begin{align} \label{eq:dev.einstein-ricci}
		\proj{\mu}{(\alpha} \proj{\nu}{\beta)} {}^{4\!}R_\mnu = 
			& \; R_{ ( \albe ) } + k k_{(\albe)} 
				- k^\gamma_{\phantom{\gamma} (\beta} \, k_{\alpha) \gamma} 
				- k_{\gamma (\alpha} k^\gamma_{\phantom{\gamma} \beta)} \nonumber \\
			& - \frac{1}{\shl} \CL_{\bm m} k_{ ( \albe ) } 
				- a_{ ( \alpha \, || \, \beta ) } - a_{ ( \alpha} a_{\beta ) } \, , 
	\end{align}
	where the parentheses imply symmetrization over the indices enclosed. Noticing that 
	\beqNo
		\proj{\mu}{( \alpha} \proj{\nu}{\beta )} \, {}^{4\!}R_\mnu
			= \proj{\mu}{\alpha} \proj{\nu}{\beta} \, {}^{4\!}R_\mnu \, , 
	\eeqNo
	because of the symmetry of the four-Ricci curvature, we combine \eqref{eq:dev.einstein-ricci} with \eqref{eq:proj.ricci.gg}, and 
	we conclude the proof.
\end{proof}
\begin{remark} \label{rem:einstein-codazzi}
	Eq.~\eqref{eq:einstein-ricci} is an expression written for \textit{symmetric} rank-2 spatial terms. It has six independent components, 
	equivalent to those given by the two-spatial projection \eqref{eq:proj.ricci.gg}.
\end{remark}
%

%
% Vertical space for a better display.
%
\vspace{0.2mm}
\begin{remark}
	With the the help of Eq.~\eqref{eq:prop.3ricci} we can write the symmetric part of the spatial Ricci curvature as 
	\beqNo
		R_{(\albe)} 
			= R_\albe + k k_{[\albe]} + k^\gamma_{\phantom{\gamma} [\alpha} k_{\beta] \gamma} \, . 
	\eeqNo
	By means of this expression and using the fact that both $k_{\gamma \alpha} k^\gamma {}_\beta$ and $a_\alpha a_\beta$ are 
	symmetric terms, we can reformulate \eqref{eq:einstein-ricci} into 
	\vspace{-1mm}
	\begin{align*}
		\frac{1}{\shl} \CL_{\bm m} k_{(\albe)} =
			& - a_{( \alpha \, || \, \beta )} - a_\alpha a_\beta + R_\albe + k k_\albe 
				- k^\gamma_{\phantom{\gamma} \beta} \, \big( k_{\alpha \gamma} + k_{\gamma \alpha} \big) \\
			&  - 4 \pi \, \Big( ( \epsilon - p ) \, \gamma_\albe + 2 \, \zeta_\albe \Big) 
				- \Lambda \gamma_\albe \, . 
	\end{align*}
\end{remark}
\begin{remark} \label{rem:einstein-codazzi.II}
	Combining directly \eqref{eq:proj.ricci.gg} with \eqref{eq:ricci_rel.II} results in the expression 
	\vspace{-1mm}
	\begin{align*}
		\frac{1}{\shl} \CL_{\bm m} k_\albe =
			& - a_{ \alpha \, || \, \beta } - a_\alpha a_\beta + R_\albe + k k_\albe 
				- k^\gamma_{\phantom{\gamma} \beta} \, \big( k_{\alpha \gamma} + k_{\gamma \alpha} \big) \\
			&  - 4 \pi \, \Big( ( \epsilon - p ) \, \gamma_\albe + 2 \, \zeta_\albe \Big) 
				- \Lambda \gamma_\albe \, . 
	\end{align*}
	This equation is made of nine independent components. Six of them are given by the symmetric part \eqref{eq:einstein-ricci} and 
	are equivalent to the projected Einstein expression \eqref{eq:proj.ricci.gg} (cf.\ Remark \ref{rem:einstein-codazzi}). The other three 
	are given by the anti-symmetric part 
	\vspace{-1mm}
	\beqNo
		\frac{1}{\shl} \CL_{\bm m} k_{ \, [ \albe ] } 
			= - a_{\, [ \alpha \, || \, \beta ] } \, , 
	\eeqNo
	which is a purely geometric expression. It can be indeed directly obtained by taking the anti-symmetric part of \eqref{eq:ricci_rel.II}. 
	Hence it does \textit{not} involve the Einstein field equation for its formulation.
\end{remark}
%

%%%%%%%%%%%%%%%%%% SECTION %%%%%%%%%%%%%%%%%%
\section{Conclusion \label{sec:conclusion}}

% -------------------- Subsection --------------------
\subsection{Summary}

We have presented in this paper the formalism for the splitting of a four-dimensional Lorentzian manifold by a set of time-like integral 
curves. From the introduction of the geometrical tensors characterizing the local spatial frames (namely, the two fundamental forms in 
Section \ref{sec:fund_forms}, the spatial covariant derivative in Section \ref{sec:3connect} and the spatial curvature tensors in Section 
\ref{sec:3curv}), we have derived in a general space--time basis and its dual the Gauss, Codazzi and Ricci equations, along with the evolution 
equation for the spatial metric (Section \ref{subsec:proj.riem}). These relations were given in Propositions \ref{prop:gauss_rel} to 
\ref{prop:ricci_rel} and were used subsequently to obtain the different projections of the four-Ricci tensor (Section \ref{subsec:proj.ricci}), 
which resulted in Corollaries \ref{cor:gauss_rel} to \ref{cor:ricci_rel}. From these last expressions, we have provided the 1+3 formulation 
of the Einstein field equation in Propositions \ref{prop:einstein-gauss} to \ref{prop:einstein-ricci} (Section \ref{sec:1+3.efe}), valid as well 
in any bases.

% -------------------- Subsection --------------------
\subsection{Discussion}

% _____ Subsubsection _____
\subsubsection{General remarks}

Let us classify for the sake of argument the differences between the 1+3 and 3+1 descriptions in two categories: their construction and 
their formalism.

The 1+3 procedure is built from a congruence of time-like integral curves, and it provides a global time-like relation between points. 
In this type of splitting, the \textit{direction of time} is chosen. The 3+1 procedure is based on the introduction of a family of space-like 
hypersurfaces, and it supplies a global space-like association of points. Here, the three \textit{directions of space} are chosen.

When both the orientations of time and space are selected, both splittings can apply. In the generic configuration, a set of time-like 
integral curves and an independent family of space-like surfaces cover the manifold. The simplest setting is made of one vorticity-free 
time-like congruence: the integral curves provide the orientation of time, and the space-like orthogonal frames, which globally form 
hypersurfaces, provide the orientation of space. On account of this particular configuration, the 3+1 description is sometimes regarded 
as being identical to the 1+3 description without vorticity, although they differ in their construction.

In terms of formalism, the key difference between the 1+3 and 3+1 descriptions comes from the properties of the extrinsic curvature tensor. 
It is symmetric in the latter description, while it contains an anti-symmetric part in the former. As it was established, this additional term 
brings about several (interrelated) effects that are absent in the 3+1 perspective: (i) it prevents the spatial frames to form hypersurfaces, 
(ii) it sources the temporal part of the Lie bracket of two spatial vectors, (iii) it induces a torsion for the spatial connection, and at last (iv) 
it calls for a redefinition of the spatial Riemann curvature.

As another difference, let us mention that in the 1+3 approach the acceleration of the flow vector cannot be written (solely) in terms of 
a gradient. In a general setting, no particular expression can be actually supplied. When choosing bases and coordinates adapted to the 
congruence, we find that the acceleration can be expressed in terms of a gradient \textit{plus} another term. This additional term, because 
of the non-zero anti-symmetric part of the extrinsic curvature, does not vanish (see Appendix \ref{app:bases&coord}).

% _____ Subsubsection _____
\subsubsection{\texorpdfstring{$1+3$}{1+3} form of the curvature tensors}

The 1+3 formulation of the four-Riemann curvature is supplied in Propositions \ref{prop:gauss_rel}, \ref{prop:cod_rel} and 
\ref{prop:ricci_rel}. All other projections either vanish or come down to those given here, owing to the symmetries of the tensor. 
We can compare these expressions with their 3+1 analogues in two ways, either by regarding the extrinsic curvature as such or 
by considering its symmetric and anti-symmetric parts.\footnote{%
See, e.g., \cite{smarr78}, \cite{baumg10} and \cite{gourg:book} for the 3+1 expressions.
}

Adopting the first point of view, Propositions \ref{prop:gauss_rel} and \ref{prop:ricci_rel} are identical to their 3+1 counterparts, 
provided that the indices of the extrinsic curvature are well ordered. Proposition \ref{prop:cod_rel}, on the other hand, differs 
from its 3+1 counterpart by an additional term.

In the second point of view, the comparison is realized by decomposing the extrinsic curvature into its symmetric and anti-symmetric 
parts. The additional contributions are given, after expansion, by all the non-symmetric terms involving $\bm k$.

%The 1+3 formulation of the four-Ricci curvature is given in Corollaries \ref{cor:gauss_rel}, \ref{cor:cod_rel} and \ref{cor:ricci_rel}.

% _____ Subsubsection _____
\subsubsection{Variation of the spatial metric}

The variation of the spatial metric along the congruence is supplied in Proposition \ref{prop:evol.3met}. The expression for the 
covariant components is given by Eq.~\eqref{eq:evol.3met.cov}, that of the contravariant components by Eq.~\eqref{eq:evol.3met.contrav}, 
and that of the mixed components by Eq.~\eqref{eq:evol.3met.mixed}.

Equation \eqref{eq:evol.3met.cov} is the only expression identical to its 3+1 counterpart in both points of view. This stems from 
the fact that the right-hand side only involves the symmetric part of the extrinsic curvature. Equations \eqref{eq:evol.3met.contrav} 
and \eqref{eq:evol.3met.mixed}, on the other hand, differ from their 3+1 counterparts as the acceleration of the flow vector cannot 
be expressed (solely) in terms of a gradient.

% _____ Subsubsection _____
\subsubsection{\texorpdfstring{$1+3$}{1+3} form of the Einstein equation}

The 1+3 formulation of Einstein's equation is given by Propositions \ref{prop:einstein-gauss}, \ref{prop:einstein-codazzi} and 
\ref{prop:einstein-ricci}. The ten component relations are equivalent to the projected Einstein equations \eqref{eq:proj.ricci.uu}, 
\eqref{eq:proj.ricci.ug} and \eqref{eq:proj.ricci.gg}, and accordingly to the Einstein field equation \eqref{eq:efe}. 

Adopting the first point of view to conduct the comparison, Propositions \ref{prop:einstein-gauss} and \ref{prop:einstein-ricci} 
are equivalent to their 3+1 counterparts. Proposition \ref{prop:einstein-codazzi} on the other hand contains one additional term. 
In the second point view, Proposition \ref{prop:einstein-ricci} is also equivalent to its 3+1 counterpart, as all terms involved in 
the relation are symmetric.

The 1+3 system of Einstein equations \eqref{eq:einstein-gauss}, \eqref{eq:einstein-codazzi} and \eqref{eq:einstein-ricci} can be 
supplemented by an evolution equation for the anti-symmetric part of $\bm k$ (cf.\ Remark \ref{rem:einstein-codazzi.II}), 
\beq \label{eq:evol.antisym_k}
	\frac{1}{\shl} \CL_{\bm m} k_{ \, [ \albe ] } 
		= - a_{\, [ \alpha \, || \, \beta ] } \, , 
\eeq
and by a constraint equation also for the anti-symmetric part of $\bm k$, 
\beq \label{eq:const.antisym_k}
	k_{[ \albe \, || \, \mu ]} 
		= a_{ [ \mu} k_{\albe ] } \, .
\eeq
(This last relation is obtained upon taking the full anti-symmetric part of Eq.~\eqref{eq:cod_rel.I}.) The system can be in addition 
complemented by the once-contracted Bianchi identities, which provide two evolution and two constraint equations for the 
electric and magnetic parts of the Weyl tensor, and by the twice-contracted Bianchi identities, which yield from Einstein's equation 
the energy and momentum conservation laws for the fluid. The system of equations is then closed by giving an equation of 
state for the fluid.\footnote{%
The interested reader can find the formulation of these two additional sets of equations in, e.g., \cite{elst97,ellis98}.
}

% _____ Subsubsection _____
%\subsubsection{A few remarks about the embedding}

%...

%%%%%%%%%%%%%%%%%% ACKNOWLEDGMENTS %%%%%%%%%%%%%%%%%

\begin{acknowledgements}
	It is a pleasure to thank Donato Bini, Thomas Buchert, Chris Clarkson, Eric Gourgoulhon and Robert T.~Jantzen for discussion 
	and valuable comments. The author acknowledges support from the Claude Leon Foundation.
\end{acknowledgements}

%%%%%%%%%%%%%%%%%% APPENDIX %%%%%%%%%%%%%%%%%%
\appendix 
\renewcommand{\theequation}{\thesection.\arabic{equation}} 
\makeatletter
	\@addtoreset{equation}{section}
\makeatother
\section{Kinematical formulation of the \texorpdfstring{$1+3$}{1+3} Einstein equations \label{app:kin_approach}}

In this appendix we reformulate the 1+3 Einstein equations, given by Propositions \ref{prop:einstein-gauss}, \ref{prop:einstein-codazzi} 
and \ref{prop:einstein-ricci}, in terms of the kinematical quantities of the fluid.

% -------------------- Subsection --------------------
\subsection{Kinematical quantities of the fluid}

The covariant derivative of the fluid 1-form $\bm u^\flat$ can be decomposed into 
\beq \label{eq:decomp.fluid_form}
	\connect \bm u^\flat 
		= - \bm a^\flat \otimes \bm u^\flat + \frac{1}{3} \Theta \, \bm \gamma + \bm \sigma - \bm \omega \, ,
\eeq
or in component form 
\beqNo
	u_{\alpha \, ; \, \beta} 
		= - a_\alpha u_\beta + \frac{1}{3} \Theta \, \gamma_\albe + \sigma_\albe - \omega_\albe \, ,
\eeqNo
where we have defined 
\beq \label{eq:def.kin_fluid}
	\Theta 
		:= u^\alpha_{\phantom{\alpha} ; \, \alpha} \, , \qquad 
	\sigma_\albe 
		:= \proj{\gamma}{( \alpha} \proj{\delta}{\beta )} \, u_{\delta \, ; \, \gamma} 
		- \frac{1}{3} \Theta \, \gamma_\albe \, , \qquad
	\omega_\albe 
		:= \proj{\gamma}{[ \alpha} \proj{\delta}{\beta ]} \, u_{\delta \, ; \, \gamma} \, .
\eeq
$\Theta$ defines the expansion rate of the fluid, $\bm \sigma$ its (spatial, symmetric and traceless) shear tensor, and $\bm \omega$ 
its (spatial, anti-symmetric and traceless) vorticity tensor.\footnote{%
Henceforth the symbol $\bm \omega$ is employed for the (2-form) fluid vorticity, and no more for a generic 1-form as in the main text.
}

% -------------------- Subsection --------------------
\subsection{Extrinsic curvature}

Inserting Eq.~\eqref{eq:decomp.fluid_form} into Eq.~\eqref{eq:prop.ext_curv.II}, we get 
\beq \label{eq:prop.ext_curv.III}
	\bm k
		= - \frac{1}{3} \Theta \, \bm \gamma - \bm \sigma + \bm \omega \, ,
\eeq
and in component form 
\beq \label{eq:comp.prop.ext_curv.III}
	k_\albe 
		= - \frac{1}{3} \Theta \, \gamma_\albe - \sigma_\albe + \omega_\albe \, .
\eeq
On using the symmetries of the fields $\bm \gamma$, $\bm \sigma$ and $\bm \omega$, we deduce from \eqref{eq:prop.ext_curv.III} 
\beqNo
	\bm k (\bm v, \bm w) - \bm k (\bm w, \bm v) 
		= 2 \, \bm \omega (\bm v, \bm w) \, , 
\eeqNo
for any spatial vectors $\bm v$ and $\bm w$. Plugging in turn this expression into \eqref{eq:prop.ext_curv.I} and \eqref{eq:exp.3torsion} 
respectively yields 
\beqNo
	\bm u \cdot [ \bm v, \bm w ] 
		= - 2 \, \bm \omega (\bm v, \bm w ) \, , \qquad
	\torsionf (\bm v, \bm w) 
		= - 2 \, \bm \omega (\bm v, \bm w ) \, \bm u \, . 
\eeqNo
In the kinematical formulation, the temporal part of the Lie bracket of two spatial vectors, or, equivalently, the torsion of the spatial 
connection, is induced by the fluid vorticity.

% -------------------- Subsection --------------------
\subsection{\texorpdfstring{$1+3$}{1+3} kinematical form of the Einstein field equation}

With the help of \eqref{eq:comp.prop.ext_curv.III}, we reformulate the 1+3 Einstein equations \eqref{eq:einstein-gauss}, 
\eqref{eq:einstein-codazzi} and \eqref{eq:einstein-ricci} in terms of the kinematical quantities of the fluid. Introducing the rate of shear 
and the rate of vorticity of the fluid respectively as 
\beqNo
	\sigma^2 
		:= \frac{1}{2} \sqrt{\sigma_\albe \sigma^\albe} \, , \qquad\;
	\omega^2 
		:= \frac{1}{2} \sqrt{\omega_\albe \omega^\albe} \, , 
\eeqNo
and adopting the notation 
\beqNo
	{\mathscr T}_{ \langle \albe \rangle } 
		:= {\mathscr T}_{ ( \albe )} - \frac{1}{3} \, {\mathscr T} \gamma_\albe
\eeqNo
to denote the symmetric trace-free part of a rank-2 spatial tensor $\bm{\mathscr T}$, we write: 
\begin{itemize}
	\item[$\bullet$] the Einstein--Gauss relation \eqref{eq:einstein-gauss} as 
		\beqNo
			R + \frac{2}{3} \Theta^2 - 2 \sigma^2 + 2 \omega^2 
				= 16 \pi \epsilon + 2 \Lambda \, , 
		\eeqNo
	\item[$\bullet$] the Einstein--Codazzi relation \eqref{eq:einstein-codazzi} as 
		\beqNo
			\frac{2}{3} \Theta_{\, | \, \alpha} - \sigma^{\gamma}_{\phantom{\gamma} \alpha \, || \, \gamma} 
				+ \omega^{\gamma}_{\phantom{\gamma} \alpha \, || \, \gamma} + 2 a^\gamma \omega_{\gamma \alpha}
					= 8 \pi q_\alpha  \, ,
		\eeqNo
	\item[$\bullet$] the Einstein--Ricci relation \eqref{eq:einstein-ricci} as 
		\beqNo
			\frac{1}{\shl} \CL_{\bm m} \Theta 
				= - 4 \pi \left( \epsilon + 3 p \right) + \Lambda - \frac{1}{3} \Theta^2 - 2 \sigma^2 + 2 \omega^2 
				+ a^\alpha_{\phantom{\alpha} || \, \alpha} + a^\alpha a_\alpha \, , 
		\eeqNo
		for the trace-part, and 
		\beqNo
			\frac{1}{\shl} \CL_{\bm m} \sigma_\albe 
				= a_{ \langle \albe \rangle } + a_{\langle \alpha} a_{\beta \rangle} - R_{ \langle \albe \rangle } 
				- \frac{1}{3} \Theta \sigma_\albe 
				+  2 \sigma^\gamma_{\phantom{\gamma} \alpha} \sigma_{\beta \gamma} 
				+ 2 \sigma^\gamma_{\phantom{\gamma} ( \alpha} \omega_{\beta ) \gamma} 
				+ 8 \pi \zeta_\albe \, , 
		\eeqNo
		for the symmetric trace-free part.
\end{itemize}
These relations constitute the kinematical formulation of the system of 1+3 equations \eqref{eq:einstein-gauss}, \eqref{eq:einstein-codazzi} 
and \eqref{eq:einstein-ricci}. They are equivalent to (the ten components of) the projected Einstein expressions \eqref{eq:proj.ricci.uu}, 
\eqref{eq:proj.ricci.ug} and \eqref{eq:proj.ricci.gg}. (See the discussion in Section \ref{sec:conclusion} for further remarks.)

%
% Vertical space for a better display.
%
\smallskip
\begin{remarkapp}
	The kinematical formulation of the additional relations \eqref{eq:evol.antisym_k} and \eqref{eq:const.antisym_k} is given by 
	\beqNo
		\frac{1}{\shl} \CL_{\bm m} \omega_\albe 
			= - a_{\, [ \alpha \, || \, \beta ] } \, , \qquad 
		\omega_{[ \albe \, || \, \mu ]} 
			= a_{ [ \mu} \omega_{\albe ] } \, .
	\eeqNo
\end{remarkapp}
%

%%%%%%%%%%%%%%%%%% APPENDIX %%%%%%%%%%%%%%%%%%
\section{Adapted bases and coordinates \label{app:bases&coord}}

In this appendix, we introduce a four-dimensional vector basis adapted to the congruence of curves along with the dual 1-form basis. 
The time-like components of spatial tensors and the decomposition of the four-metric are specified. We introduce subsequently a set 
of adapted coordinates and we briefly discuss the implications for the 1+3 formalism.

For additional information about the choice of bases in the threading point of view, as well as a presentation of adapted local coordinates, 
we refer the reader to, e.g., \cite{jantzen91,jantzen92,bini12,jantzen:book}.

% -------------------- Subsection --------------------
\subsection{Bases adapted to the congruence}

% _____ Subsubsection _____
\subsubsection{Construction}

The congruence of integral curves upon which the threading procedure is built is characterized by the tangent vector field $\bm u$. 
This provides the following natural choice for the four-dimensional basis $\{ \basis_\alpha \}$: 
\beq \label{eq:def.basis.cong_vect}
	\basis_0 := \shl \bm u \, , 
		\qquad
	\basis_i \cdot \basis_i > 0 \, . 
\eeq
$\shl$ is the threading lapse function, introduced earlier via the evolution vector $\bm m$. In the kinematical formulation, it relates 
the flow of an arbitrary parameter $t$ of a congruence line to the flow of the proper time $\tau$ of the fluid element moving along 
that line, 
\beq \label{eq:def.1+3_lapse}
	\frac{d\tau}{dt} =: \shl \, . 
\eeq
The basis vectors $\{ \basis_i \}$ are only required to be space-like (specifically, we do not ask them to be spatial). The form basis 
dual to \eqref{eq:def.basis.cong_vect} is given by 
\beq \label{eq:def.basis.cong_form}
	\basis^0 := - \shl^{-1} \bm u^\flat + \bm \shl \, , 
		\qquad\;
	\langle \basis^i, \bm u \rangle = 0 \, .  
\eeq
$\bm \shl$ defines the threading shift 1-form, which provides the freedom of choice for the spatial part of the time-like 1-form basis, 
\beqNo
	\orthproj (\basis^0) 
		= \bm \shl \, . 
\eeqNo
The spatial character of the forms $\{ \basis^i \}$ is implied by the dual condition $\langle \basis^i , \basis_0 \rangle %
= \kroneck{i}{0}$.

In these bases, the components of the flow vector $\bm u$ and those of its dual $\bm u^\flat$ are respectively written 
\beq \label{eq:comp.u}
	u^\alpha = \shl^{-1} ( 1, 0) \, , 
		\qquad 
	u_\alpha = \shl ( -1, \shl_i ) \, .
\eeq
\begin{remarkapp}
	It is worth mentioning the similarities with the construction of adapted bases in the slicing approach (see, e.g., \cite{gourg:book}). 
	Given the unit time-like vector $\bm n$ everywhere orthogonal to a family of space-like hypersurfaces, the bases are there 
	defined by 
	\beq \label{eq:def.basis.hyp_form}
		\nbasis^0 := - N^{-1} \bm n^\flat \, , 
			\qquad\; 
		\bm g^{-1} (\nbasis^i, \nbasis^i) > 0 \, , 
	\eeq
	for the 1-form basis, and 
	\beq \label{eq:def.basis.hyp_vect}
		\nbasis_0 := N \bm n + \bm N \, , 
			\qquad\; 
		\langle \nbasis^i, \bm n \rangle = 0 \, ,
	\eeq
	for the dual vector basis. $N$ is the slicing lapse function; it relates the flow of an arbitrary parameter $t$ of an integral curve of 
	tangent vector $\bm n$ to the flow of the proper time of the (Eulerian) observers moving along that line. $\bm N$ defines the 
	slicing shift vector, and it offers the freedom of choice for the hypersurface-tangent part of the time-like vector basis.
	In these bases, the components of $\bm n$ and $\bm n^\flat$ are respectively written 
	\beqNo
		n^\alpha	= N^{-1} ( 1, - N^i) \, , 
			\qquad 
		n_\alpha	= - N ( 1, 0 ) \, .
	\eeqNo
\end{remarkapp}
%

% _____ Subsubsection _____
\subsubsection{Components of spatial tensors}

In the bases \eqref{eq:def.basis.cong_vect} and \eqref{eq:def.basis.cong_form}, the time-like covariant components of a spatial tensor 
$\bm T$ of type $(k,l)$ vanish, 
\beqNo
	T^{\, \alpha_1 \ldots \alpha_k}_{\phantom{\, \alpha_1 \ldots \alpha_k} \beta_1 \ldots \, 0 \, \ldots \, \beta_l} 
		= 0 \, , 
\eeqNo
and the time-like contravariant components are given by 
\beqNo
	T^{\, \alpha_1 \ldots \, 0 \, \ldots \alpha_k}_{\phantom{\, \alpha_1 \ldots \, 0 \, \ldots \alpha_k} \beta_1 \ldots \beta_l} 
	= \shl_i \;
		T^{\, \alpha_1 \ldots \, i \, \ldots \alpha_k}_{%
			\phantom{\alpha_1 \ldots \, i \, \ldots \alpha_k} \beta_1 \ldots \beta_l} \, . 
\eeqNo
%

% _____ Subsubsection _____
\subsubsection{Decomposition of the four-metric}

The metric tensor $\bm g$ is decomposed in the basis \eqref{eq:def.basis.cong_form} according to 
\beqNo
	\bm g 
		= - \shl^2 \, \basis^0 \otimes \basis^0 
		+ 2 \shl^2 \shl_i \, \basis^0 \otimes \basis^i \\
		+ \left( \gamma_{ij} - \shl^2 \shl_i \shl_j \right) \basis^i \otimes \basis^j \, , 
\eeqNo
and its inverse $\bm{g}^{-1}$ is decomposed in the basis \eqref{eq:def.basis.cong_vect} as 
\beqNo
	\bm g^{-1} 
		= - \big( \shl^{-2} - \shl_i \shl^i \big) \, \basis_0 \otimes \basis_0 
		+ \, 2 \shl^i \, \basis_0 \otimes \basis_i 
		+ \gamma^{ij} \, \basis_i \otimes \basis_j \, . 
\eeqNo
\begin{remarkapp}
	Let us mention again the similarities with the slicing approach. In the bases \eqref{eq:def.basis.hyp_form} and 
	\eqref{eq:def.basis.hyp_vect}, the metric tensor is written 
	\beqNo
		\bm g
			= - \big( N^2 - N_i N^i \big) \, \nbasis^0 \otimes \nbasis^0 
			+ 2 N_i \, \nbasis^0 \otimes \nbasis^i 
			+ h_{ij} \, \nbasis^i \otimes \nbasis^j \, , 
	\eeqNo
	and its inverse is decomposed as 
	\beqNo
		\bm g^{-1} 
			= - N^2 \, \nbasis_0 \otimes \nbasis_0 
			+ 2 N^{-2} N^i \, \nbasis_0 \otimes \nbasis_i 
			+ \left( h^{ij} - N^{-2} N^i N^j \right) \nbasis_i \otimes \nbasis_j \, , 
	\eeqNo
	where $\bm h := \bm g + \bm n \otimes \bm n$ denotes the metric of the space-like hypersurfaces.
\end{remarkapp}
%

% -------------------- Subsection --------------------
\subsection{Adapted coordinates}

% _____ Subsubsection _____
\subsubsection{Construction}

We introduce the set of coordinates $(t,X^i)$ adapted to the congruence of curves as follows: 
\begin{itemize}
	\item[$\bullet$] the spatial coordinates $X^i$ label the integral curves of $\scC$, 
	\item[$\bullet$] the temporal coordinate $t$ labels a family of hypersurfaces $\{ \Sigma_t \}$.
\end{itemize}
The second item demands that the manifold is globally hyperbolic, which we suppose henceforth. As the congruence can exhibit vorticity, 
the space-like hypersurfaces cannot be orthogonal to $\bm u$. We shall denote by $\bm n$ their unit normal vector.

The vector basis associated with the coordinates $(t, X^i)$ is written 
\beq \label{eq:def.basis.coord_vect}
	\{ \bm \partial_t, \bm \partial_i \} \, .
\eeq
By definition, the time-like vector $\bm \partial_t$ is tangent to the line of constant spatial coordinates, hence it is tangent to the integral 
curves. We choose it to be 
\beq \label{eq:def.0basis.coord_vect}
	\bm \partial_t := M \bm u \, . 
\eeq
By definition, the space-like vectors $\{ \bm \partial_i \}$ are tangent to the line $t = \mathrm{const.}$, $\{ X^j = \mathrm{const.} \}_{i %
\neq j}$; therefore they are tangent to the hypersurfaces. We have 
\beqNo
	\langle \bm n^\flat, \bm \partial_i \rangle = 0 \, .
\eeqNo
The vectors $\{ \bm \partial_i \}$ are not spatial (in the present terminology) since we have $\bm u \neq \bm n$.

We write the 1-form basis dual to \eqref{eq:def.basis.cong_vect} as 
\beq \label{eq:def.basis.coord_form}
	\{ {\bf d}t, {\bf d}X^i \} \, . 
\eeq
Because the basis $\{ \bm \partial_t, \bm \partial_i \}$ is a particular case of \eqref{eq:def.basis.cong_vect}, we can identify its dual to 
\eqref{eq:def.basis.cong_form} and write accordingly 
\beq \label{eq:basis.coord_form}
	{\bf d}t := - M^{-1} \bm u^\flat + \bm M \, , \qquad 
	\langle {\bf d}X^i , \bm u \rangle = 0 \,  .
\eeq
This concludes the construction of coordinates adapted to the congruence and of their related bases.

% _____ Subsubsection _____
\subsubsection{Lie derivative}

We now specify the form taken by the operator 
\beqNo
	\frac{1}{\shl} \CL_{\bm m} \, , 
	 	\;\; \mathrm{with} \;\;\,
	 	\bm m := M \bm u \, , 
\eeqNo
in the coordinate basis \eqref{eq:def.basis.coord_vect}. From Eq.~\eqref{eq:def.0basis.coord_vect}, the definition of the Lie derivative 
(see, e.g., \cite{gourg:book}) and the fact that $\bm \partial_t$ is a basis vector, we have 
\beq \label{eq:lie.temp_der}
	\frac{1}{\shl} \CL_{\bm m} 
		= \frac{1}{\shl} \partial_t \, . 
\eeq
Noticing that the partial derivative with respect to $t$ is evaluated along the curves of constant coordinates $X^i$, we can identify 
$\partial_t$ to the total (convective) derivative $d/dt$. Accordingly we can write, by means of \eqref{eq:def.1+3_lapse}, 
\beqNo
	\frac{1}{\shl} \CL_{\bm m} 
		= \frac{d}{d\tau} \, . 
\eeqNo
The choice of the coordinates $(t,X^i)$ and of their related bases is thus completely adapted to the congruence. In the 1+3 kinematical 
formulation, they supply the Lagrangian point of view for the description of the fluid dynamics.

% _____ Subsubsection _____
\subsubsection{Vorticity and acceleration}

Let us finally turn to the component forms of the fluid vorticity and acceleration in the introduced bases. Special attention should 
be given to the former quantity, as it constitutes a central aspect in the 1+3 (kinematical) formalism (see the discussion in Section 
\ref{sec:conclusion}).

In the bases \eqref{eq:def.basis.coord_vect} and \eqref{eq:def.basis.coord_form} the components of the vorticity are written 
\beq \label{eq:comp.vort}
	\omega_{ij} 
		= \shl \left( \shl_{[ i} \shl_{j ], t} - \shl_{[i, j]} \right) \, .
\eeq
\begin{proof}
	From Eq.~\eqref{eq:def.kin_fluid} and Eq.~\eqref{eq:comp.cd_tens} written for $\bm u$, we have 
	\beqNo
		\omega_\albe 
			= \proj{\gamma}{[ \alpha} \proj{\delta}{\beta ]} 
				\left( \basis_\gamma (u_\delta) - \coef{\lambda}{\gamma}{\delta} u_\lambda \right) \, . 
	\eeqNo
	With the help of \eqref{eq:rel.coef} and using the fact that the structure coefficients of $\{ \bm \partial_t, \bm \partial_i \}$ 
	vanish, we obtain 
	\beqNo
		\omega_\albe 
			= \proj{\gamma}{[ \alpha} \proj{\delta}{\beta ]} \, \basis_\gamma (u_\delta) \, . 
	\eeqNo
	$\bm \omega$ being a spatial tensor, its time-like covariant components are zero in the basis \eqref{eq:def.basis.coord_form}. 
	Hence we consider 
	\beqNo
		\omega_{ij} 
			= \proj{\gamma}{[ i} \proj{\delta}{j ]} \, \basis_\gamma (u_\delta) \, . 
	\eeqNo
	From Eqs.~\eqref{eq:comp.orth_proj} and \eqref{eq:comp.u} we conclude the proof.
\end{proof}
\begin{remarkapp}
	Equation \eqref{eq:comp.vort} shows that a vanishing shift 1-form implies a vanishing vorticity (the converse is not true). 
	This can be understood in terms of the choice of the bases as follows. A zero shift implies that the flow form $\bm u^\flat$ 
	can be formulated as a gradient (cf.\ Eq.~\eqref{eq:basis.coord_form}). Hence its rotational vanishes and the vorticity cancels.
\end{remarkapp}
%
% Vertical space for a better display.
%
\smallskip

\noindent
At last, we write the components of the acceleration with respect to the bases \eqref{eq:def.basis.coord_vect} and 
\eqref{eq:def.basis.coord_form} as 
\beqNo
	a_i = \frac{D_i \shl}{\shl} + \shl_{i , t} \, . 
\eeqNo
\begin{proof}
	From Eqs.~\eqref{eq:comp.lie_tens} and \eqref{eq:comp.3connect} we have 
	\beq \label{eq:dev.comp.acc}
		\proj{\beta}{\alpha} \CL_{\bm m} u_\beta 
			= \shl a_\alpha - D_\alpha \shl \, . 
	\eeq
	The components $u_\alpha$ are written by means of \eqref{eq:basis.coord_form} as 
	\beq
		u_\alpha = \shl \left( - ({\bf d}t)_\alpha + \shl_\alpha \right) \, . 
	\eeq
	Using the definition of the Lie derivative (see, e.g., \cite{gourg:book}) and the fact that $\bm m = {\bm \partial_t}$ is a basis 
	vector and ${\bf d}t$ a basis 1-form, we find 
	\beqNo
		\proj{\beta}{\alpha} \CL_{\bm m} u_\beta 
			= \shl \proj{\beta}{\alpha} \CL_{\bm m} \shl_\beta \,  .
	\eeqNo
	From \eqref{eq:evol.3met.mixed} and the spatial character of $\bm \shl$ we reformulate the right-hand side and we insert the 
	outcome into \eqref{eq:dev.comp.acc} to obtain 
	\beqNo
		a_\alpha = \frac{D_\alpha \shl}{\shl} + \CL_{\bm m} \shl_\alpha \, .
	\eeqNo
	With the help of \eqref{eq:lie.temp_der} we conclude the proof.
\end{proof}
%

%%%%%%%%%%%%%%%%%% BIBLIOGRAPHY %%%%%%%%%%%%%%%%%

% -------------------- End of document
\end{document}